\newtheorem{definition}{Definition}
\newtheorem{theorem}{Theorem}
\providecommand{\keywords}[1]
{
	\small	
	\textbf{\textit{Keywords---}} #1
}
\date{}
\title{The Secure Link Prediction Problem}
\author{Laltu Sardar,   \\
	Cryptology and Security Research Unit,\\
	Indian Statistical Institute, Kolkata, India\\
	E-mail: laltuisical@gmail.com  
	\and
	Sushmita Ruj,\\
	R C Bose Centre for Cryptology and Security\\
	Indian Statistical Institute, Kolkata, India\\
	E-mail: sush@isical.ac.in
}
\begin{document}
	\maketitle
	
	\begin{abstract}
		Link Prediction is an important and well-studied problem for social networks. Given a snapshot of a graph, the link prediction problem predicts which new interactions between members are most likely to occur in the near future. As networks grow in size, data owners are forced to store the data in remote cloud servers which reveals sensitive information about the network. The graphs are therefore stored in encrypted form. 
		
		We study the link prediction problem on encrypted graphs. To the best of our knowledge, this secure link prediction problem has not been studied before. We use the number of common neighbors for prediction. We present three algorithms for the secure link prediction problem. We design prototypes of the schemes and formally prove their security. We execute our algorithms in real-life datasets.
	\end{abstract}

 \keywords{	Link prediction, Homomorphic encryption,
	Garbled circuit, Secure computation, Cloud computing.}

\section{Introduction}  \label{Introduction}
Social networks have become an integral part of our lives. These networks can be represented as graphs with nodes being entities (members) of the network and edges representing the association between entities (members). 
As the size of these graphs increases, it becomes quite difficult for small enterprises and business units to store the graphs in-house. So, there is a desire to store such information in cloud servers. 

In order to protect the privacy of individuals (as is now mandatory in EU and other places), data is often anonymized before storing in remote cloud servers. However, as pointed out by Backstrom \emph{et al.} \cite{BDK11}, anonymization does not imply privacy. By carefully studying the associations between members, a lot of information can be gleaned.

The data owner, therefore, has to store the data in encrypted form.
Trivially, the data owner can upload all data in encrypted form to the cloud. Whenever some query is made, data owner has to download all data, do necessary computations and re-upload the re-encrypted data. This is very inefficient and does not serve the purpose of cloud service.
Thus, we need to keep the data stored in the cloud in encrypted form in such a way that we can compute efficiently on the encrypted data.

Some basic queries for a graph are neighbor query (given a vertex return the set of vertices adjacent to it), vertex degree query (given a vertex, return the number of adjacent vertices), adjacency query (given two vertices return if there is an edge between them) etc. It is important that when an encrypted graph supports some other queries, like shortest distance queries, it should not stop supporting these basic queries.

Nowell and Kleinberg~\cite{Liben-NowellK03} first defined the link prediction problem for social networks. 
The link prediction problem states that given a snapshot of a graph whether we can predict which new interactions between members are most likely to occur in the near future.
For example, given a node $A$ at an instant, the link prediction problem tries to find the most likely node $B$ with which $A$ would like to connect at a later instant. 
%It finds the probability of an user to be connected with another user. 
Different types of distance metrics are used to measure the likelihood of the formation of new links. The distances are called \emph{score} (\cite{Liben-NowellK03}).
Nowell and Kleinberg, in \cite{Liben-NowellK03}, considered several metrics including common neighbors, Jaccard's coefficient, Adamic/Adar, preferential attachment, Katz$_\beta$ etc. For example, if $A$ and $B$ (with no edge between them) have a large number of common neighbors they are more likely to be connected in future. In this paper, for simplicity, we have considered common neighbors metric to predict the emergence of a link.

Though there has been a large body of literature on link prediction, to the best of our knowledge the secure version of the problem has not been studied to date. 
\emph{Secure Link Prediction (SLP)} problem computes link prediction algorithms over secure i.e., encrypted data.

\medskip
\noindent
\textbf{Our Contribution}\quad
We introduce the notion of secure link prediction and present three constructions. 
In particular, we ask and answer the question, ``Given a snapshot of a graph $G \equiv (V,E)$ ($V$ is the set of vertices and $E \subseteq V \times V$) at a given instant and a vertex $v\in V$, which is the most likely vertex $u$, such that, $u$ is a neighbor of $v$ at a later instant and $vu \notin E$". The score-metric we consider is the number of common neighbors of the two vertices $v$ and $u$. This can be used to answer the question, ``Given a snapshot of a graph $G=(V,E)$ at a given instant and a vertex $v\in V$, which are the $k$-most likely neighbors of $v$ at a later instant such that none of these $k$ vertices were neighbors of $v$ in $G$."

Note that the data owner outsources an encrypted copy of the graph $G$ to the cloud and sends an encrypted vertex $v$ as a query. The cloud runs the secure link prediction algorithm and returns an encrypted result, from which the client can obtain the most likely neighbor of $v$. 
The cloud knows neither the graph $G$ nor the queried vertex $v$.

It is to be noted that the client has much less computational and storage capacity. We propose three schemes, ($\mathtt{SLP}$-$\mathtt{I}$, $\mathtt{SLP}$-$\mathtt{II}$ and $\mathtt{SLP}$-$\mathtt{III}$), in all of which, the client takes the help of a proxy server which makes it efficient to obtain query results. 
At a high level: 
\begin{enumerate}
	\item $\mathtt{SLP}$-$\mathtt{I}$: is the most efficient with almost no computation at client-side and leaks only the scores to the proxy server.
	\item $\mathtt{SLP}$-$\mathtt{II}$: has a little more communication at client-side compared to $\mathtt{SLP}$-$\mathtt{I}$ but leaks the scores of a subset of vertices to the proxy server.
	\item $\mathtt{SLP}$-$\mathtt{III}$: is a very efficient scheme with almost no computation and communication at the client-side and leaks almost nothing to the proxy. This is achieved with  an extra computational and communication cost between the cloud and the proxy.  
\end{enumerate}%
In all three schemes, the client does not leak anything, to the cloud, except the number of vertices in the graph.

We have designed the scheme in such a way that it supports link prediction query as well as basic queries. Each of the previous schemes on encrypted graph are designed to support a specific query (for example, shortest distance query, focused subgraph query etc.). However, we have designed more general schemes that support not only link prediction query but also basic queries including neighbor query, vertex degree query, adjacency query etc.

All our schemes have been shown to be adaptively secure in real-ideal paradigm.

Further, we have analyzed the performance of the schemes in terms of storage requirement, computation cost and communication cost, and counted the execution time of the schemes assuming benchmark implementations of some underlying cryptographic primitives. 
we have implemented prototypes for the schemes  $\mathtt{SLP}$-$\mathtt{I}$ and $\mathtt{SLP}$-$\mathtt{II}$, and measured the performance with different real-life datasets to study the feasibility.

From the experiment, we see that they take $12.15$s and $13.75$s to encrypt whereas $8.87$s and $8.59$s process query for a graph with $10^2$ vertices.

\medskip
\noindent
\textbf{Organization}\quad
The rest of the paper is organized as follows. 
Related work is discussed in Section~\ref{sec:RelatedWorks}. 
Preliminaries and cryptographic tools are discussed in Section~\ref{sec:Preliminaries}.
Link prediction problem and its security are described in Section~\ref{sec:SLP}.
Section~\ref{sec:SLP1} describes our proposed scheme for $\mathtt{SLP}$-$\mathtt{I}$. Two improvements of $\mathtt{SLP}$-$\mathtt{I}$, $\mathtt{SLP}$-$\mathtt{II}$ and $\mathtt{SLP}$-$\mathtt{III}$, are discussed in Section~\ref{sec:SLP2} and Section~\ref{sec:SLP3} respectively.
In Section~\ref{sec:PerformanceAnalysis}, a comparative study of the complexities of our proposed schemes is given. 
In Section~\ref{sec:ExperimentalEvaluation}, details of our implementation and experimental results are shown.
A variant of link prediction problem  $\mathtt{SLP}_k$ is introduced in Section~\ref{sec:slpk}.
Finally, a summary of the paper and future research direction are given in Section~\ref{sec:Conclusion}.

\section{Related Work} \label{sec:RelatedWorks}
Graph algorithms are well studied when the graph is not encrypted. Since, necessity of outsourcing  graph data in encrypted form is increasing very fast and encryption makes it  difficult to work those algorithms, study is required to enable them.
There are only few works that deals with the `query' on `outsourced encrypted graph'. 

Chase and Kamara~\cite{ChaseK10} introduced the notion of graph encryption while they were presenting structured encryption as a generalization of searchable symmetric encryption (SSE) proposed by Song \emph{et al.}~\cite{SongWP00}. They presented schemes for \emph{neighbor queries}, \emph{adjacency queries} and \emph{focused subgraph queries} on labeled graph-structured data. 
In all of their proposed schemes, the graph was considered as an adjacency matrix and  each entry was encrypted separately using symmetric key encryption. The main idea of their scheme, given a vertex and the corresponding key, the scheme could return adjacent vertices. However, complex query requires complex operation (like addition, subtraction, division etc.) on adjacent matrix which make the scheme unsuitable.

A parallel secure computation framework \emph{GraphSC} has been designed and implemented by Nayak \emph{et al.}~\cite{NayakWIWTS15}. This framework computes functions like histogram, PageRank, matrix factorization etc. 
To run this algorithms, \emph{GraphSC} introduced parallel programming paradigms to secure computation. The parallel and secure execution enables the algorithms to perform even for large datasets. However, they adopt Path-ORAM~\cite{ccs/StefanovDSFRYD13} based techniques which is inefficient if the client has little computation power or the client doesn't uses very large size RAM.  

Sketch-based approximate shortest distance queries over encrypted graph have been studied by Meng \emph{et al.}~\cite{MengKNK15}. 
In the pre-processing stage, the client computes the sketches for every vertex that is useful for efficient shortest distance query. Instead of encrypting the graph directly, they encrypted the pre-processed data. Thus, in their scheme, there is no chance of getting information about the original graph.

Shen \emph{et al.}~\cite{ShenMZMDH18} introduced and studied cloud-based approximate \emph{constrained shortest distance queries} in encrypted graphs which finds the shortest distance with a constraint such that the total cost does not exceed a given threshold. 

Exact distance has been computed on dynamic encrypted graphs in \cite{SecGDB}. Similar to our paper, this paper uses a proxy to reduce client-side computation and information leakage to the cloud. 
In the scheme, adjacency lists are stored in an inverted index. However,  in a single query, the scheme leaks all the nodes reachable from the queried vertex which is a lot of information about the graph. For example, if the graph is complete, it reveals the whole graph.  

A graph encryption scheme, that supports top-$k$ nearest keyword search queries, has been proposed by Liu \emph{et al.}~\cite{LiuZC17}. They have made an encrypted index using order preserving encryption for searching.
Together with lightweight symmetric key encryption schemes, homomorphic encryption is used to compute on encrypted data.

Besides,
Zheng \emph{et al.}~\cite{ZhengWLH15} proposed link prediction in decentralized social network preserving the privacy. Their construction split the link score into private and public parts and applied sparse logistic regression to find links based on the content of the users. However, the graph data was not considered to be encrypted in the privacy preserving link prediction schemes.

In this paper, we outsource the graph in encrypted form. In most of the previous works, the schemes are designed to perform single specific query like neighbor query (\cite{ChaseK10}), shortest distance query (\cite{MengKNK15,ShenMZMDH18,SecGDB}), focused subgraph queries (\cite{ChaseK10}) etc. So, either it is hard to get the information about the source graph (\cite{MengKNK15}, \cite{ShenMZMDH18}), as they do not support basic queries, or leaks a lot of information for a single query (\cite{SecGDB}). One trivial approach is that taking different schemes and use all of them to support all types of required queries.
In this paper, our target is to get as much information about the graph as possible whenever required with supporting the link prediction query and leak as little information as possible. To the best of our knowledge, the secure link prediction problem has not been studied before. We study issues on link prediction problem in encrypted outsourced data and give three possible solutions overcoming them.

%========================================================
\section{Preliminaries} \label{sec:Preliminaries}
Let $G = (V,E)$ be a graph and $A = (a_{ij}) _{N \times N}$ be its adjacency matrix where $N$ is the number of vertices. Let $\lambda$ be the security parameter. 
Set of positive integers $\{1,2,\cdots,n\} $ is denoted by $[n]$. 
By $ x \xleftarrow{\$} X $, we mean to choose a random element from the set $X$. $D\log$ denotes the discrete logarithm. $id: \{ 0,1\}^* \rightarrow \{ 0,1\}^ {\log N}$ gives the identifiers corresponding to the vertices.
A function $negl : \mathbb{N} \leftarrow \mathbb{R}$ is said to be \emph {negligible} over $n$ if $\forall c \in \mathbb{N}$, $\exists N_c \in \mathbb{N}$ such that $\forall n> N_c,\ negl(n)<n^{-c} $. 

A probabilistic polynomial-time (PPT) permutation $\{0, 1\} ^* \times \{0, 1\} ^n \rightarrow \{0, 1\}^n$ is said to be a \emph{Pseudo Random Permutation (PRP)} if it is indistinguishable from random permutation by any PPT adversary. We consider two PRPs, $F_{k_{perm}}$ and $\pi_s$, where $k_{perm}$ and $s$ are their keys (or seeds) respectively.

\subsection{Bilinear Maps} \label{ss:BilinearMaps}
Let $\mathbb{G}$ and $\mathbb{G}_1$ be two (multiplicative) cyclic groups of order $n$ and $g$ be a generator of $\mathbb{G}$. A map $e :\mathbb{G} \times \mathbb{G} \rightarrow \mathbb{G}_1$ is said to be an \emph{admissible non-degenerate bilinear map} if--
\begin{enumerate}
	\item  $\forall u,v \in \mathbb{G}$ and $\forall a, b \in \mathbb{Z}$, we have $e(u^a,v^b) = e(u,v)^{ab}$,
	\item $ e(g,g) \neq 1$, and
	\item $e$ can be computed efficiently.
\end{enumerate}

Our algorithms use bilinear map based BGN encryption scheme \cite{2dnf}. So, we first discuss this.

%In this paper, we have used bilinear map based BGN scheme  for encryption. The scheme is described bellow.

\subsection{BGN Encryption Scheme} \label{ss:BGN} 
Boneh \emph{et al.}~\cite{2dnf} proposed a homomorphic encryption scheme (henceforth referred to as BGN encryption scheme) that allows an arbitrary number of additions and one multiplication.
The scheme consists of three algorithms- $\mathtt{Gen}()$, $\mathtt{Encrypt}()$ and $\mathtt{Decrypt}()$ 	. 

\begin{algorithm} \DontPrintSemicolon
	\caption{$\mathtt{Gen}(1^\lambda)$} \label{algo:bgnGen}
	
	$(q_1,\ q_2,\ \mathbb{G},\  \mathbb{G}_1,\ e) \gets \mathcal{G}(\lambda) $ \;
	$n \gets q_1q_2$	\;
	$g \xleftarrow{\$} \mathbb{G} $; $ r \xleftarrow{\$} [n] $ \;
	$ u \gets g^r $; $ h \gets u^{q_2}  $ \;
	$ sk \gets  q_1$; $ pk\gets  (n,\ \mathbb{G},\ \mathbb{G}_1,\ e,\ g,\ h) $ \;
	\Return $(pk,sk) $ \;
	
\end{algorithm}	
%--------------------------------------------------------

\medskip
\noindent
{\bf Key generation:} This takes a security parameter $\lambda$ as input and outputs a public-private key pair $(pk,sk)$ (see Algo.~\ref{algo:bgnGen}). Here, $ pk =  (n,\ \mathbb{G},\ \mathbb{G}_1,\ e,\ g,\ h) $ and $sk = q_1$. In $pk$, $e$ is a bilinear map from $\mathbb{G}\times \mathbb{G}$ to $\mathbb{G}_1$ where both $\mathbb{G} $ and $\mathbb{G}_1$ are groups of order $q_1$. Note that, given $\lambda$, $\mathcal{G}$ returns $(q_1,\ q_2,\ \mathbb{G},\  \mathbb{G}_1,\ e)$ (see \cite{2dnf}) where $q_1$ and $q_2$ are two large primes, and $\mathbb{G}$ and $\mathbb{G}_1$ are groups of order $n=q_1q_2$.

\begin{figure}[!htb]
	\centering
	\begin{minipage}{.4\textwidth}
		\begin{algorithm}[H] \DontPrintSemicolon 
			\caption{$\mathtt{Encrypt}_\mathbb{G}( pk, a )$} \label{algo:bgnEncryptG}
			
			$  (n,\ \mathbb{G},\ \mathbb{G}_1,\ e,\ g,\ h) \gets pk$  \;
			$r \xleftarrow{\$} [n]$ \;
			$c \gets g^ah^r $  \;
			\Return $c$ \;
		\end{algorithm}
	\end{minipage}%
	\begin{minipage}{.09\textwidth}
		\
	\end{minipage}%
	\begin{minipage}{0.51\textwidth}
		\begin{algorithm}[H] \DontPrintSemicolon
			\caption{$\mathtt{Decrypt}_\mathbb{G}(pk, sk,c )$} \label{algo:bgnDecryptG}
			
			$  (n,\ \mathbb{G},\ \mathbb{G}_1,\ e,\ g,\ h) \gets pk$; $q_1 \gets sk$  \;
			$c' \gets c^ {q_1}$; $\hat{g} = g^{q_1} $ \;
			$s = D\log_{\hat{g}} {c'}$ \label{dlogComputeG} \;
			\Return $s$ \;
		\end{algorithm}	
	\end{minipage}
\end{figure}

%--------------------------------------------------------

\medskip
\noindent
{\bf Encryption:} An integer $a$ is encrypted in $G$ using Algo.~\ref{algo:bgnEncryptG}. Let $a_1$ and $a_2$ be two integers that are encrypted in $\mathcal{G}$ as $c_1$ and $c_2$. Then, the bilinear map $e(c_1,c_2)$, belongs to $\mathbb{G}_1$, gives the encryption of $(a_1a_2)$. Note that arbitrary addition of plaintext is also possible in the group $\mathbb{G}_1$. If $g$ is a generator of the group $\mathbb{G}$, $e(g,g)$ acts as a generator of the group $\mathbb{G}_1$. Thus, the encryption of an integer $a$ is possible in $\mathbb{G}_1$ in similar manner (see Algo.~\ref{algo:bgnEncryptG1}).

\begin{figure}[!htb]
	\centering
	\begin{minipage}{.46\textwidth}
		\begin{algorithm}[H] \DontPrintSemicolon
			\caption{$\mathtt{Encrypt}_{\mathbb{G}_1}( pk, a )$} \label{algo:bgnEncryptG1}
			$  (n,\ \mathbb{G},\ \mathbb{G}_1,\ e,\ g,\ h) \gets pk$ \;
			$r\xleftarrow{\$} [n]$\;
			$g_1 \gets e(g,g)$; $h_1 \gets e(g,h)$ \;
			$c \gets (g_1)^a (h_1)^r $ \;
			\Return $c$ \;
		\end{algorithm}
	\end{minipage}%
	\begin{minipage}{.08\textwidth}
		\
	\end{minipage}%
	\begin{minipage}{0.46\textwidth}
		\begin{algorithm}[H] \DontPrintSemicolon
			\caption{$\mathtt{Decrypt}_{\mathbb{G}_1}(pk, sk,c )$} \label{algo:bgnDecryptG1}
			$  (n,\ \mathbb{G},\ \mathbb{G}_1,\ e,\ g,\ h) \gets pk$ \; 
			$q_1 \gets sk$ \;
			$c' \gets c^ {q_1}$; $\hat{g}_1 = e(g,g)^{q_1} $ \;
			$s = D\log_{\hat{g}} {c'}$ \label{dlogComputeG1} \;
			\Return $s$ \;
		\end{algorithm}
	\end{minipage}
\end{figure}

%--------------------------------------------------------

\medskip \noindent
{\bf Decryption:} At the time of encryption each entry is randomized. The secret key $q_1$ eliminates the randomization. Then, it is enough to find discrete logarithm $D\log$ of the rest. Algo.~\ref{algo:bgnDecryptG} and Algo.~\ref{algo:bgnDecryptG1} describes the decryption in $\mathbb{G}$ and $\mathbb{G}_1$ respectively.
In decryption algorithms, $D\log$ computation can be done with expected time $O(\sqrt{n} )$ using Pollard's lambda method \cite{MenezesOV96}. However, it can be done in constant time using some extra storage (\cite{2dnf}).

Let $\mathtt{BGN}$ be an encryption scheme as described above. Then, it is a tuple of five algorithms ($\mathtt{Gen}$, $\mathtt{Encrypt}_\mathbb{G}$, $\mathtt{Decrypt}_\mathbb{G}$, $\mathtt{Encrypt}_{\mathbb{G}_1}$, $\mathtt{Decrypt}_{\mathbb{G}_1}$) as described in Algo.~\ref{algo:bgnGen}, \ref{algo:bgnEncryptG}, \ref{algo:bgnDecryptG}, \ref{algo:bgnEncryptG1} and \ref{algo:bgnDecryptG1} respectively.

%88888888888888888888888888888888888888888888888888888888888888888888
\subsection{Garbled Circuit (GC)}\label{ss:SecureComputations}
Let us consider two parties, with input $x$ and $y$ respectively, who want to compute a function $f (x, y) $. Then, a  garbled circuit~\cite{Yao82b,LindellP09}  allows them to compute $f(x,y)$ in such a way that none of the parties get any `meaningful information' about the input of the other party and none, other than the two parties, is able to compute $f(x,y)$. 

Kolesnikov \emph{et al.}~\cite{KolesnikovS08} introduced an optimization of garbled circuit that allows XOR gates to be computed without communication or cryptographic operations \cite{SecGDB}. Kolesnikov \emph{et al.}~\cite{KolesnikovSS09} presented efficient GC constructions for several basic functions using the garbled circuit construction of \cite{KolesnikovS08}. In this paper, we use garbled circuit blocks for subtraction ($\mathtt{SUB}$), comparison ($\mathtt{COMP}$) and multiplexer ($\mathtt{MUX}$) functions from \cite{KolesnikovS08}.

%8888888888888888888888888888888888888888888888888888888888888888888888888888888888888888 

\section{The Secure Link Prediction (SLP) Problem} \label{sec:SLP}
Given $G = (V,E)$, let $N_{v} $ denotes the set of vertices incident on $v\in V$. 
Let $score(v,u)$ be a measure of how likely the vertex $v$ is connected to another vertex $u$ in the near future, where $vu \notin E$. 
A variant of the \emph{Link Prediction} problem states that given $v \in V$, it returns a vertex $u \in V$ ($vu \notin E$) such that $score(v,u) $ is the maximum in  $\{{ score(v,u): u \in V \setminus( N_{v} \cup \{v\} ) }\}$ i.e.,
\begin{equation}
score(v,u) \geq score(v,u'), \forall u' \in V \setminus( N_{v} \cup \{v\} )
\end{equation} 
Thus, given a vertex $v$, we find most likely vertex to connect with. There are various metrics to measure score like the number of common neighbors, Jaccard's coefficient, Adamic/Adar metric etc. 
In this paper, we consider $score(v,u)$ as the number of common nodes between $v$ and $u$ i.e., $score(v,u) = |N_{v} \cap N_{u}|$. 
Let $A$ be the adjacency matrix of the graph $G$. If $i_v$ and $i_u$ are the rows corresponding to the vertices $v$ and $u$ respectively then, the score is the inner product of the rows i.e., $score(v,u) = \sum_{k=1}^{N} A[i_v][k].A[i_u][k] $. In this paper we have used BGN encryption scheme to securely compute this inner product.    

\subsection{System Overview} 
Here, we describe the system model considered for the link prediction problem and goals which we want to achieve.
%==========================================================

\medskip 
\noindent
%{\bf Design Goals:}
{\bf System Model:} 
In our model (see Fig.~\ref{fig:systemModel}), there is a client, a cloud server, and a proxy server. Each of them communicates with others to execute the protocol.

The \emph{client} is the data owner and is considered to be \emph{trusted}. It outsources the graph in encrypted form to the cloud server and generates link prediction queries. Given a vertex $v$, it queries for the vertex $u$ which is most likely to be connected in the future. 
\begin{figure}[htbp]
	\centering
	{\includegraphics[width=0.4\textwidth]{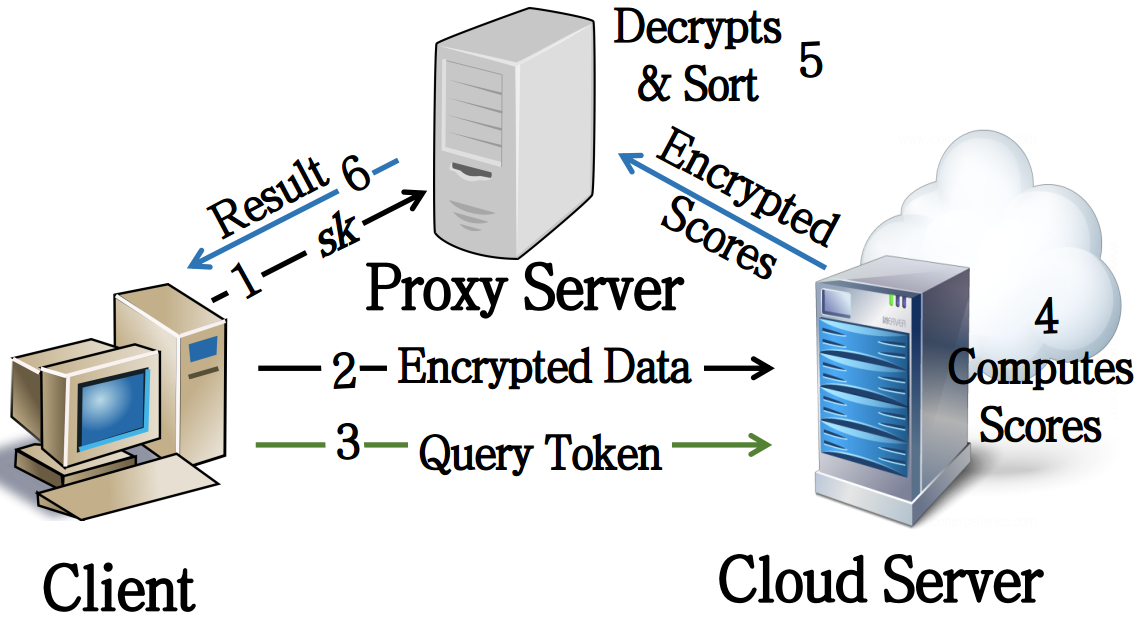}}
	\caption{System model}
	\label{fig:systemModel}
\end{figure}

The \emph{cloud server (CS)} holds the encrypted graph and computes over the encrypted data when the client requests a query. We assume that the cloud server is honest-but-curious	. It is curious to learn and analyze the encrypted data and queries. Nevertheless, it is honest and follows the protocol. 

The \emph{proxy server (PS)} helps the cloud server and the client to find the most likely vertex securely. It reduces computational overhead of the client by performing decryptions.  
However, the proxy server is assumed to be honest-but-curious.	

All channels connecting the client, the cloud and the proxy servers are assumed to be secure. An adversary can eavesdrop on channels but can not tamper messages sent along it. However, we assume, the cloud and the proxy servers do not collude.

This system model is to outsource as much computation as possible without leaking the information about the data, assuming the client has very low computation power (like mobile devices). This kind of model to outsource computation previously used by Wang et al.~\cite{SecGDB} for secure comparison. Assumption of the proxy and cloud server do not collude is a standard assumption.

%==========================================================

\medskip 
\noindent
{\bf Design Goals:}
In this paper, under the assumption of the above system model, we aim at providing a solution for the secure link prediction problem. In our design, we want to achieve the following objectives. 
\begin{enumerate}
	\item \emph{Confidentiality:}  
	The cloud and proxy servers should not get any information about the graph structure i.e., the servers should not be able to construct a graph which is isomorphic to the source graph.  
	
	\item \emph{Efficiency:} In our model, the client is weak with respect to storage and computations. Since the cloud server has a large amount of storage and computation power, the client outsources the data to it. 
\end{enumerate}

Moreover, the client should efficiently perform neighbor query, vertex degree query or adjacency query. These are the basic query that every graph should support. The client should leak as little information as possible.

\subsection{Secure Link Prediction Scheme} \label{ss:securityDefinitions}
In this section, we present definition of link prediction scheme for a graph $G$ and its security against adaptive chosen-query attack.

\begin{definition}
	A \emph{secure link prediction ($\mathtt{SLP}$) scheme} for a graph $G$ is a tuple $(\mathtt{KeyGen} $, $\mathtt{EncMatrix} $, $\mathtt{TrapdoorGen}$, $\mathtt{LPQuery}$, $\mathtt{FindMaxVertex})$ of algorithms as follows.
	\begin{itemize} 
		\item $(\mathcal{PK}, \mathcal{SK})\gets \mathtt{KeyGen}(1^ \lambda):$ is a client-side PPT algorithm that takes $\lambda$ as a security parameter and outputs a public key $\mathcal{PK}$ and a secret key $\mathcal{SK}$.
		
		\item $ T 	\gets \mathtt{EncMatrix}(G, \mathcal{SK, PK}): $ is a client-side PPT algorithm that takes a public key $\mathcal{PK}$, a secret key $\mathcal{SK}$ and a graph $G $ as inputs and outputs a structure $T$ that stores the encrypted adjacency matrix of $G$.
		
		\item $\tau_v 	\gets \mathtt{TrapdoorGen}(v,\mathcal{SK} ):$ is a client-side PPT algorithm that takes a secret key $\mathcal{SK} $ and a vertex $v$ as inputs and outputs a query trapdoor $\tau _v$.
		
		\item $ \hat{c} \gets \mathtt{LPQuery}(\tau _v, T):$ is a PPT algorithm run by a cloud 	 server that takes a query trapdoor $\tau_v$ and the structure $T$ as inputs and outputs list of encrypted scores $\hat{c}$ with all vertices.
		
		\item $i_{res} \gets \mathtt{FindMaxVertex} (pk,sk,\hat{c}):$ is a PPT algorithm run by a proxy server that takes $pk$, $sk$ and $\hat{c}$ as inputs and outputs the most probable vertex identifier $i_{res}$ to connect with the queried vertex.
		
	\end{itemize}
\end{definition}
\noindent
{\bf Correctness:} 
An $\mathtt{SLP}$ scheme is said to be correct if,	$\forall \lambda \in \mathbb{N}$, $\forall (\mathcal{PK}, \mathcal{SK}) $ generated using $\mathtt {KeyGen}(1^\lambda)$ and all sequences of queries on $T$, each query outputs a correct vertex identifier except with negligible probability.

%==========================================================
\bigskip \noindent
{\bf Adaptive security:} 
An $\mathtt{SLP}$ scheme should have two properties: 
\begin{enumerate}
	\item Given $T$, the cloud servers should not learn any information about $G$ and
	\item From a sequence of query trapdoors, the servers should learn nothing about corresponding queried vertices. 
\end{enumerate}

The security of an $SLP$ is defined in real-ideal paradigm.
In real scenario, the the challenger $\mathcal{C}$ generates keys. The adversary $\mathcal{A}$ generates a graph $G$ which it sends to $\mathcal{C}$. $\mathcal{C}$ encrypts the graph with its secret key and sends it to $\mathcal{A}$. Later, $q$ times it finds a query vertex based on previous results (i.e., adaptive) and receives trapdoor for the current. Finally $\mathcal{A}$ outputs a guess bit $b$.
In ideal scenario, on receiving the graph $G$, the simulator $\mathcal{S}$ generates a simulated encrypted matrix. For each adaptive query of $\mathcal{A}$, $\mathcal{S}$ returns a simulated token. Finally $\mathcal{A}$ outputs a guess bit $b'$.
The security definition (Definition~\ref{def:cqa2slp}) ensures $\mathcal{A}$ can not distinguish $\mathcal{C}$ from $\mathcal{S}$. 

We have assumed that the communication channel between the client and the servers are secure. Since the CS and the PS do not collude, they do not share their collected information. So, the simulator can treat CS and PS separately.

In our scheme, the proxy server does not have the encrypted data or the trapdoors. During query operation, it gets a set of scrambled scores of the queried vertex with other vertices. So, we can consider only the cloud server as the adversary (see \cite{BoschPLLTWHJ14}). Let us define security as follows. 

\begin{figure}[!htb]
	\centering
	\begin{minipage}{.45\textwidth}
		\begin{algorithm}[H] \DontPrintSemicolon
			\caption{ ${\textbf{Real}}^{\mathtt{SLP}}_ {\mathcal{A}}(\lambda)$} \label{algo1:gameReal}
			
			$(\mathcal{PK}, \mathcal{SK})\gets \mathtt{KeyGen}(1^ \lambda)$  \;
			$(G, st_{\mathcal{A}}) \gets \mathcal{A}_0(1^ \lambda)$ \;
			$T \gets \mathtt{EncMatrix}(G, \mathcal{SK, PK})$ \;
			$(v_1,st_{\mathcal{A}}) \gets  \mathcal{A}_1(st_{\mathcal{A}},T) $ \;
			$\tau_{v_1} \gets \mathtt{TrapdoorGen}(v_1,\mathcal{SK} )$ \;
			\For { $2 \leq i \leq q$}{
				$(v_i , st_{\mathcal{A}} ) \gets \mathcal{A}_i (st_{\mathcal{A}},T, \tau_{v_1}, \ldots , \tau_{v_{i-1}})$ \;
				$\tau_{v_i} \gets \mathtt{TrapdoorGen}(v_i,\mathcal{SK} )$ \;
			}
			$\tau = (\tau_{v_1}, \tau_{v_2}, \ldots , \tau_{v_{q}})$ \;
			%$V_{iew} = (T,\tau)$ \;
			$b  \gets \mathcal{A}_{q+1} { (T,\tau, st_{\mathcal{A}})}$, where $b\in \{0,1\} $ \;
			\Return $b$ \;
			
		\end{algorithm}	
	\end{minipage}%
	\begin{minipage}{.03\textwidth}
		\
	\end{minipage}%
	\begin{minipage}{0.5\textwidth}
		
		\begin{algorithm}[H] \DontPrintSemicolon
			\caption{ ${\textbf{Ideal}}^{\mathtt{SLP}}_ {\mathcal{A}, \mathcal{S}}(\lambda)$} \label{algo1:gameIdeal}
			
			$(G, st_{\mathcal{A}}) \gets \mathcal{A}_0(1^ \lambda)$  \;
			$ (st_{\mathcal{S}},T) \gets \mathcal{S}_0(\mathcal{L}_{bld}(G) ) $  \;
			$(v_1,st_{\mathcal{A}}) \gets  \mathcal{A}_1(st_{\mathcal{A}},T) $  \;
			$ (\tau_{v_1} , st_{\mathcal{S}}) \gets \mathcal{S}_1(st_{\mathcal{S}}, \mathcal{L}_{qry}(v_1)) $  \;
			\For { $2 \leq i \leq q$}{
				$(v_i , st_{\mathcal{A}} ) \gets \mathcal{A}_i (st_{\mathcal{A}},T, \tau_{v_1}, \ldots , \tau_{v_{i-1}})$  \;
				$(\tau_{v_i} , st_{\mathcal{S}}) \gets \mathcal{S}_i (st_{\mathcal{S}}, \mathcal{L}_{qry}(v_1), \ldots, \mathcal{L}_{qry}(v_{i-1}) )$  \;
			}
			$\tau = (\tau_{v_1}, \tau_{v_2}, \ldots , \tau_{v_{q}})$  \;
			%$V_{iew} = (T,\tau)$  \;
			$b'  \gets \mathcal{A}_{q+1} { (T, \tau, st_{\mathcal{A}})}$, where $b' \in \{0,1\} $  \;
			\Return $b'$  \;
			
		\end{algorithm}	
	\end{minipage}
\end{figure}

\begin{definition}[Adaptive semantic security $(\mathtt{CQA2})$] \label{def:cqa2slp}
	Let $\mathtt{SLP}$ = $(\mathtt{KeyGen} $, $\mathtt{EncMatrix} $, $\mathtt{TrapdoorGen}$, $\mathtt{LPQuery}$, $\mathtt{FindMaxVertex})$ be a secure link prediction scheme. Let $\mathcal{A}$ be a stateful adversary, $\mathcal{C}$ be a challenger, $\mathcal{S}$ be a stateful simulator and $\mathcal{L} = (\mathcal{L}_{bld} , \mathcal{L}_{qry} )$ be a stateful leakage algorithm. Let us consider two games- ${\textbf{Real}}^{\mathtt{SLP}}_ {\mathcal{A}}(\lambda)$ (see Algo.~\ref{algo1:gameReal}) and ${\textbf{Ideal}}^{\mathtt{SLP}}_ {\mathcal{A}, \mathcal{S}}(\lambda)$ (see Algo.~\ref{algo1:gameIdeal}).
	
	The $\mathtt{SLP}$ is said to be adaptively semantically $\mathcal{L}$-secure against chosen-query attacks ($\mathtt{CQA2}$) if, $\forall$ PPT adversaries $\mathcal{A} = (\mathcal{A}_0, \mathcal{A}_1,\ldots, \mathcal{A}_{q+1})$, where $q=poly(\lambda)$, $\exists$ a PPT simulator $\mathcal{S} = (\mathcal{S}_0, \mathcal{S}_1, \ldots, \mathcal{S}_q)$, such that 
	\begin{equation}
	|Pr[{\textbf{Real}}^{\mathtt{SLP}}_ \mathcal{A}(\lambda)=1] - Pr[{\textbf{Ideal}}^{\mathtt{SLP}} _ {\mathcal{A},\mathcal{S}}(\lambda)=1]| \leq negl(\lambda)
	\end{equation}
\end{definition}	

\subsection{Overview of our proposed schemes}
A graph can be encrypted in several ways like adjacency matrix, adjacency list, edge list etc. Each of them has advantages and disadvantages depending on the application. In our scheme, we have defined  score as the number of common neighbors that can be calculated just by computing inner product of the rows corresponding to the calculating vertices. The basic idea is that, given a vertex, to predict the most probable vertex to connect with, we compute scores with all other vertices and sort them according to their score. However, calculating the inner product and sorting them in cloud server are expensive operations and there is no scheme that provides all of the functionality to be computed over encrypted data. So, we have used BGN homomorphic encryption scheme that enables us to compute inner product on encrypted data. Choosing BGN, gives power to the client for querying not only link prediction query but also neighbor query, degree of a vertex query, adjacency query etc. 

Besides, the score computation, the score decryption and sorting the score in encrypted form is non-trivial keeping in mind that the client has low computation power.  
So, we have proposed three schemes that perform score computations as well as sorting on encrypted data with the help of a honest-but-querious proxy server which does not collude with the cloud server. The three schemes show tread-off between the computation cost, communication cost and leakage in order to compute the vertex most probable to connect with.

\section{Our Proposed Protocol for SLP}\label{sec:SLP1} 

In this section, we propose an efficient scheme $\mathtt{SLP}$-$\mathtt{I}$ and analyze its security. The scheme is divided into three phases-- key generation, data encryption, and query phase. 
The client first generates required secret and public keys. Then it encrypts the adjacency matrix of the graph in a structure and uploads it to the CS. To query for a vertex, the client generates a query trapdoor and sends it to the CS. The CS computes encrypted score (i.e., inner products of the row corresponding to the queried vertex with the other vertices on the encrypted graph). The PS decrypts the scores, finds the vertex with highest score and sends the result to the client.      

%==========================================================

\medskip 
\noindent
{\bf Key Generation:} \label{ss4:keyGen}
In this phase, given a security parameter $\lambda$, the client chooses a bilinear map $e :\mathbb{G} \times \mathbb{G} \rightarrow \mathbb{G}_1$.   
Then, the permutation key ${k_{perm}}$ is chosen at random for the PRP $F: \{0,1\}^{*}\times \{0,1\}^{\log N} \rightarrow \{0,1\}^{\log N} $. It executes $\mathtt{BGN.Gen}() $ to get $sk$ and $pk$.
After generating private key $\mathcal{SK}$ and public key $\mathcal{PK}$, a part $sk$ of $\mathcal{SK}$ is shared with the PS. This part of the key helps the PS to compute secure comparisons. Key generation is described in Algo.~\ref{algo4:keyGen}. 

\begin{algorithm} \DontPrintSemicolon
	\caption{$\mathtt{KeyGen}(1^ \lambda)$} \label{algo4:keyGen}
	
	$ k_{perm} \xleftarrow{\$} \{0,1\}^\lambda$   \;
	$(pk,sk) \gets \mathtt{BGN.Gen}(1^\lambda) $  \;
	$\mathcal{PK} \gets pk$;
	$\mathcal{SK} \gets (sk, k_{perm} ) $  \;
	\Return $ (\mathcal{PK}, \mathcal{SK})$  \;
\end{algorithm}
%==========================================================
\medskip \noindent
{\bf Data Encryption:} In this phase, the client encrypts the adjacency matrix with its private key and uploads the encrypted matrix to the CS (see Algo.~\ref{algo4:EncMatrix}). Each entry $a_{ij}$ in the adjacency matrix $A$ of $G$ is encrypted using Algo.~\ref{algo:bgnEncryptG}. Let $ M = (m_{ij})_{N\times N}$ be the encrypted matrix. Then, each row of $M$ is stored in the structure $T$. The PRP $F$ gives the position in $T$ corresponding to vertices. Finally, the structure $T$ is sent to the CS.

\begin{figure}[!htb]
	\centering
	\begin{minipage}{.5\textwidth}
		\begin{algorithm}[H] \DontPrintSemicolon
			\caption{$\mathtt{EncMatrixI}(A, \mathcal{SK, PK})$} \label{algo4:EncMatrix}
			
			$(n,\mathbb{G}, \mathbb{G}_1, e, g, h) \gets \mathcal{PK}$  \;
			$(q_1, k_{perm}) \gets \mathcal{SK}$  \;
			\For {$i=1, j=1$ \KwTo $i=N, j=N$} {
				$m_{ij} \gets \mathtt{BGN.Encrypt_{\mathbb{G}}}(\mathcal{PK}.pk, a_{ij}) $  \;
			} 
			Construct a structure $T$ of size $N$.   \;
			\For {$i=1$ \KwTo $i=N$} {
				$ind \gets F_{k_{perm}}(id(v_i))$  \;
				$T[ind] \gets (m_{i1},m_{i2}, \ldots, m_{iN} ) $.  \;
			} 
			\Return $T$  \;
			
		\end{algorithm}%
		
		\begin{algorithm}[H] \DontPrintSemicolon
			\caption{$\mathtt{TrapdoorGenI}(v, \mathcal{SK})$} \label{algo4:TrapdoorGen}
			
			$(sk, k_{perm} ) \gets \mathcal{SK}  $  \;
			$ i' \gets F_{k_{perm}}(id(v)) $; $s \xleftarrow{\$} \{0,1\}^{\lambda}$  \;
			$\tau_v \gets ( i' ,s)$  \;
			\Return $\tau_v$
			
		\end{algorithm}%	
	\end{minipage}%
	\begin{minipage}{.02\textwidth}
		\
	\end{minipage}%
	\begin{minipage}{0.5\textwidth}
		\begin{algorithm}[H] \DontPrintSemicolon
			\caption{$\mathtt{LPQueryI}(\tau _v, T)$} \label{algo4:LPQuery}
			
			$N \gets |T|$; $(i',s ) \gets \tau_v $   \;
			$(m_{i' 1},m_{i' 2}, \ldots, m_{ i' N} ) \gets T[i'] $  \;
			\For 	{$i = 1$ \KwTo $i = N$}{
				$r \xleftarrow{\$} \{0,1\}^\lambda $  \;
				\eIf {$i \neq i' $ }{
					$(m_{i1},m_{i2}, \ldots, m_{iN} ) \gets T[{i}] $  \;
					$c_{i} \gets e(g,h)^{r} .\prod ^{N} _{k=1} e(m_{i'k},m_{ik}) $ \label{encScoreComp}  \;
				}{
					$c_{i'} \gets e(g,g)^{0}.e(g,h)^{r} $  \;
				}			
			}
			
			$\pi _s\gets$ permutation with key $s$.  \;
			$\hat{c} \gets (c_{\pi_s (1) },c_{\pi_s (2) }, \ldots, c_{\pi_s (N) })$  \; 
			$\hat{m} \gets (m_{\pi_s (1) },m_{\pi_s (2) }, \ldots, m_{\pi_s (N) })$,\\
			where $m_{i } \gets m_{i'i}.h^{r_i}$, $r_i \xleftarrow{\$} \{0,1\}^\lambda $  \;
			\Return ($ \hat{c} $, $\hat{m} $) to the PS   \;
			
		\end{algorithm}%
	\end{minipage}
\end{figure}

%==========================================================
\medskip \noindent
{\bf Query:}
In the query phase, the client sends a query trapdoor to the CS. The CS finds encrypted scores with respect to the other vertices and sends them to the PS. The PS decrypts them and sends the identifier of  the vertex with highest score  to the client.

To query for a vertex $v$, the client first chooses a secret key $s \xleftarrow{\$} \{0,1\}^{\lambda}$ for the PRP $\pi _s$ that is not known to the PS (see  Algo.~\ref{algo4:TrapdoorGen}). Then it finds the position $ i' = F_{k_{perm}}(id(v)) $. Finally, the client sends the trapdoor $\tau_v = ( i' ,s)$ as query trapdoor to the CS.

On receiving $\tau_v $, the CS computes the encrypted scores $ (c_{1}, c_{2},\ldots , c_{N})$ (see Algo.~\ref{algo4:LPQuery}) and computes $(m_{1}, m_{2},\ldots , m_{N}) $ corresponding to the queried vertex. Using $\pi _s$, the CS shuffles the order of the encrypted scores and $m_i$'s. Finally, the CS sends the shuffled encrypted scores and the scrambled queried-row entries $ (m_{\pi _s(1) },m_{\pi_s (2) }, \ldots, m_{\pi_s (N) })$ to the PS.

\begin{algorithm} \DontPrintSemicolon
	\caption{$\mathtt{FindMaxVertexI}(sk, \hat{c} ,\hat{m} )$} \label{algo4:FindMaxVertex}
	
	$ (\bar{c}_{1 },\bar{c}_{2}, \ldots, \bar{c}_{N }) \gets \hat{c} $  \;
	$ (\bar{m}_{1 },\bar{m}_{2}, \ldots, \bar{m}_{N }) \gets \hat{m}$  \;
	\For 	{$i = 1$ \KwTo $i = N$} {
		${s}_i \gets \mathtt{BGN.Decrypt}_{\mathbb{G}_1}(pk,sk,\bar{c}_i)$  \;
		${a}_i \gets (\mathtt{BGN.Decrypt}_{\mathbb{G}}(pk, sk,\bar{m}_i)) \bmod 2$  \;
	}	  
	$i_{res} \gets i: (a_i=0) \wedge (s_i = max \{s_j:j\in [N]\})  $ \;
	\Return $i_{res} $ to the client  \;
	
\end{algorithm}

Since, the PS has $sk$ ($=q_1$), it can decrypt all $\bar {c_i}$s and $\bar {m_i}$s. It decrypts $\bar {m_i}$ first and then decrypts $\bar{c_i}$ only if corresponding decrypted value of $\bar{m_i}$ is 0. Then, it takes an ${i_{res}}$ such that $s_{i_{res}}$ is the maximum in the set $\{s_i: i \in [N] \}$ and sends it to the client (see Algo.~\ref{algo4:FindMaxVertex}). 
Finally, the client finds the resulting vertex identifier $v_{res}$ as $v_{res} \gets \pi_s ^{-1} ({i_{res}}) $.

%========================================================== 

\medskip \noindent
{\bf Correctness:}
For any two rows $T[i]$ and $T[j]$, if $c_{ij}$ is the encryption of the score $s_{ij}$ then, $c_{ij}	= e(g,h)^{r} \prod ^{N} _{k=1} e(m_{ik},m_{jk})$. Again, since ${e(g,g)^{q_1q_2}} =1$, we get $(c_{ij})^{q_1}= (e(g,g)^{q_1})^ {\sum ^{N} _{k=1} {a_{ik}a_{jk}}}$ = $\hat{g} ^{s_{ij}},\ where\ \hat{g} = e(g,g)^{q_1} $.

Thus, $D\log$ of $(c_{ij})^{q_1}$ to the base $\hat{g} $ gives $s_{ij}$. If powers of $\hat{g}$ are pre-computed, the score $s_{ij}$ can be found in constant time. However, Pollard's lambda method \cite{MenezesOV96} can be used to find discrete logarithm of $c'_{ij}$ base $\hat{g}$.

%===========================================================

\subsection{Security Analysis} 
In the security definition, a small amount of leakage has been allowed. 
The adversary knows the algorithms and possesses the encrypted data and queried trapdoors. Only $\mathcal{SK}$ is unknown to it.
The leakage function $\mathcal{L}$ is a pair $(\mathcal{L}_{bld}, \mathcal{L}_{qry})$ (associated with $\mathtt{EncMatrix}$ and $\mathtt{LPQuery}$ respectively) where $\mathcal{L}_{bld}(G) =  \{|T|\} $ and $\mathcal{L}_{qry}(v) = \{ \tau _{v} \}$.
\begin{theorem} \label{th:security1}
	If $\mathtt{BGN}$ is semantically secure and $F$ is a PRP, then $\mathtt{SLP}$-$\mathtt{I}$ is $\mathcal{L} $-secure against adaptive chosen-query attacks.
\end{theorem}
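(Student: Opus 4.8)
The plan is to exhibit, for every PPT adversary $\mathcal{A}$, a PPT simulator $\mathcal{S} = (\mathcal{S}_0, \mathcal{S}_1, \ldots, \mathcal{S}_q)$ that, given only $\mathcal{L}_{bld}(G) = \{|T|\}$ and the per-query leakage $\mathcal{L}_{qry}(v_i)$, produces a view computationally indistinguishable from the real one. Since the cloud server is the only party we must defeat (the PS does not collude and never sees $T$ or the raw trapdoors, only the scrambled pair $(\hat{c}, \hat{m})$), the adversary's entire view consists of the encrypted structure $T$ together with the sequence of trapdoors $\tau_{v_i} = (i', s)$. I would therefore split the argument into two essentially independent simulation tasks --- faking $T$ and faking the trapdoors --- and glue them together with a hybrid argument.

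For the structure, $\mathcal{S}_0$ receives $N = |T|$ and builds a table of the correct shape by filling every one of the $N^2$ slots with a fresh $\mathtt{BGN.Encrypt}_{\mathbb{G}}(\mathcal{PK}.pk, 0)$. To show this fake $T$ is indistinguishable from the real encrypted adjacency matrix I would run a standard hybrid over the entries: hybrid $H_t$ encrypts the first $t$ entries honestly and the remaining $N^2 - t$ as encryptions of $0$; any distinguisher between $H_{t-1}$ and $H_t$ yields an adversary against the semantic security of $\mathtt{BGN}$. Summing the $N^2$ steps bounds the distinguishing advantage of this part by $N^2$ times a negligible quantity.

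For the trapdoors, the key observation is that $\tau_{v_i} = (i', s)$ with $i' = F_{k_{perm}}(id(v_i))$ and $s \xleftarrow{\$} \{0,1\}^{\lambda}$, and that the (now content-free) entries of $T$ are independent of the index $i'$, so correctness of the simulated view only requires that the simulated indices be \emph{consistent}: equal for repeated vertices and distinct for distinct vertices, since $F$ is a permutation. I would have the simulator maintain a dictionary; on a fresh vertex it draws a uniform index from the unused elements of the range, reuses the stored value on repeats, and always samples $s$ uniformly. Replacing $F_{k_{perm}}$ by a truly random permutation is justified by the PRP security of $F$ (the adversary never sees $k_{perm}$, only the outputs that appear as the $i'$ components), and a random permutation's outputs are exactly what the dictionary procedure generates; this contributes the PRP-distinguishing advantage to the bound.

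The delicate point --- and where I expect the real work to lie --- is \emph{adaptivity}: each $\mathcal{A}_i$ chooses $v_i$ after seeing $T$ and the earlier trapdoors, so the simulator must commit to $T$ before the queries are known yet keep every later trapdoor consistent with that commitment. This is precisely why faking $T$ with encryptions of $0$ is essential: because the simulated entries carry no information and the indices $i'$ are independent of the encrypted content, the up-front choice of $T$ never constrains the adaptively generated trapdoors, and the two simulation tasks compose cleanly. Chaining the structure hybrids with the PRP replacement then yields
\[
|Pr[{\textbf{Real}}^{\mathtt{SLP}}_ {\mathcal{A}}(\lambda)=1] - Pr[{\textbf{Ideal}}^{\mathtt{SLP}} _ {\mathcal{A},\mathcal{S}}(\lambda)=1]| \leq \epsilon_{F}(\lambda) + N^2 \cdot \epsilon_{\mathtt{BGN}}(\lambda),
\]
where $\epsilon_{F}$ and $\epsilon_{\mathtt{BGN}}$ denote the PRP-distinguishing advantage against $F$ and the semantic-security advantage against $\mathtt{BGN}$, respectively. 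As both are $negl(\lambda)$ and $N = poly(\lambda)$, the right-hand side is negligible, establishing that $\mathtt{SLP}$-$\mathtt{I}$ is $\mathcal{L}$-secure against adaptive chosen-query attacks.
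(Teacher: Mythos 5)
Your proposal is correct and follows essentially the same route as the paper's own proof: simulate $T$ by filling all $N^2$ slots with $\mathtt{BGN}$ encryptions of zero, simulate trapdoors via a dictionary of random consistent indices, and reduce indistinguishability to the semantic security of $\mathtt{BGN}$ and the PRP security of $F$. The only differences are that you spell out the entry-by-entry hybrid with an explicit bound $\epsilon_F(\lambda) + N^2\cdot\epsilon_{\mathtt{BGN}}(\lambda)$ and sample fresh indices without replacement (preserving the injectivity a permutation guarantees), details the paper's sketch leaves implicit.
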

\begin{proof} 
	The proof of security is based on the simulation-based $\mathtt{CQA}$-$\mathtt{II}$ security (see Definition~\ref{def:cqa2slp}).
	Given the leakage $\mathcal{L}_{bld}$, the simulator $\mathcal{S}$ generates a randomized structure $ \widetilde{T}$ which simulates the structure $ {T} $ of the challenger $\mathcal{C}$. 
	Given a query trapdoor $\tau_{v}$, $\mathcal{S}$ returns simulated trapdoors $\widetilde{\tau_{v}}$ maintaining system consistency of the future queries by the adversary. To prove the theorem, it is enough to 
	show that the trapdoors generated by $\mathcal{C}$ and $\mathcal{S}$ are indistinguishable to $\mathcal{A}$.
	\begin{itemize}
		\item (Simulating the structure $T$) $\mathcal{S}$ first generates $ (\mathcal{SK}, \mathcal{PK}) \gets \mathtt{BGN}.\mathtt{Gen}(1^{\lambda})$. Given $ \mathcal{L}_{bld} (A)$, $\mathcal{S}$ takes an empty structure $\widetilde{T}$ of size $|T|$. Finally, it takes $\widetilde{m_{ij}} \gets \mathtt{BGN}.\mathtt{Encrypt_\mathbb{G}}( \mathcal{PK}.pk,0^{\lambda}), \ (i, j )\in [N] \times [N]$ where $N = |T| $.
		
		\item (Simulating query trapdoor $\tau_v$) $\mathcal{S}$ first takes an empty dictionary $Q$. Given $ \mathcal{L}_{srch}(v)$, $\mathcal{S}$ checks whether $v$ is present in $Q$. If not, it takes a random $\log N$-bit string $\widetilde{ \tau_v}$, stores it as $Q[v] = \widetilde{ \tau_v} $ and returns $\widetilde{ \tau_v} $. If $v$ has appeared before, it returns $Q[v]$.		
	\end{itemize}
	Semantic security of $\mathtt{BGN}$ guarantees that $\widetilde{m_{ij}} $ and ${m_{ij}} $ are indistinguishable. Since $F$ is a PRP, $ \widetilde{ \tau_v}$ and ${ \tau_v}$ are indistinguishable. This completes the proof.
\end{proof}

%8888888888888888888888888888888888888888888888888888888888888888888888888888888888888888
%8888888888888888888888888888888888888888888888888888888888888888888888888888888888888888 
%\newpage
\section{$\mathtt{SLP}$-$\mathtt{II}$ with less leakage} \label{sec:SLP2} 
Though the $\mathtt{SLP}$-$\mathtt{I}$ scheme is efficient, it has few disadvantages.
Firstly, in $\mathtt{SLP}$-$\mathtt{I}$, the number of common nodes between the queried vertex and all other vertices are leaked to the PS which provides partial knowledge of the graph to it.
Since, the server PS is semi honest, we want to leak as little information as possible.
In this section, we propose another scheme $\mathtt{SLP}$-$\mathtt{II}$ that hides most of the scores from the PS which results in leakage reduction. 

Secondly, the client has high communication cost with PS while processing a link prediction query.  Our proposed $\mathtt{SLP}$-$\mathtt{II}$ scheme has an advantages over this with reduced communication cost from CS to PS is. We achieve these by using extra storage of size of the matrix $M$ and extra bandwidth from the PS to the CS of $O(N)$.

\subsection{Proposed Protocol} 
In $\mathtt{SLP}$-$\mathtt{II}$, after computing the scores, the CS increases that of the incident vertices randomly from maximum possible score i.e., degree of the queried vertex. For example, let $s$ be a score in the form $g_1^s$, then a random number $r $, greater than or equal to the degree, is added with it.
Then the scores is increased as $g_1^s.g_1^r = g_1^{(s+r)}$. Since lower bound of  $r$ is known to the client, it can eliminate the scores with adjacent vertices.    
The PS only derypts the scores and sends the sorted list to the client. Since the degree is hidden from PS and known to the client, it can eliminate the vertices with score larger than degree.   
The algorithms are as follows.

\medskip
\noindent
{\bf Key Generation:} Same as Algo.~\ref{algo4:keyGen}. 
%----------------------------------------

\medskip \noindent
{\bf Data Encryption:} In $\mathtt{SLP}$-$\mathtt{II}$, data encryption is similar to Algo.~\ref{algo4:EncMatrix}. %The encrypted matrix $M = (m_{ij})_{N \times N}$ is generated. 
Together with $M = (m_{ij})_{N \times N}$, another matrix $M' = (m'_{ij})_{N \times N} $ is generated by encrypting a matrix B (see Algo.~\ref{algo2:EncMatrix}). The matrix $B = (b_{ij})_{N \times N}$ where $b_{ij} = t $, ($\deg{v_i}<t<N-\deg{v_i} $) if $v_i$ and $v_j$ are connected, else $b_{ij} = 0 $. Now, $m'_{ij} = e(g,g)^{b_{ij}}.e(g,h)^{r_{ij}} $, where notations are usual. Finally, The matrices $M$ and $M'$ are uploaded to the CS together in structures $T$ and $T'$ respectively. Rows of $M$ and $M'$ corresponding to the vertex $v$ are stored in $T[F_{k_{perm}}(id(v))]$ and $T'[F_{k_{perm}}(id(v))]$ respectively.
Note that, entries of $M$ are in the group $\mathbb{G}$ whereas that of $M'$ are in $\mathbb{G}_1$.

\begin{figure}[!htb]
	\centering
	\begin{minipage}{.5\textwidth}
		\begin{algorithm}[H] \DontPrintSemicolon
			\caption{$\mathtt{EncMatrixII}(A, \mathcal{SK, PK})$} \label{algo2:EncMatrix}
			
			$(n,\mathbb{G}, \mathbb{G}_1, e, g, h) \gets \mathcal{PK}$; $(q_1, k_{perm}) \gets \mathcal{SK}$   \;
			Construct matrix $B$ from $A$  \;
			\For {$i=1, j=1$ \KwTo $i=N, j=N$} {
				$m_{ij} \gets \mathtt{BGN.Encrypt_{\mathbb{G}}}(\mathcal{PK}.pk, a_{ij}) $  \;
				$m'_{ij} \gets \mathtt{BGN.Encrypt_{\mathbb{G}_1}}(\mathcal{PK}.pk, b_{ij}) $  \;
			} 
			Construct structures $T$ and $T'$ of size $N$  \;
			\For {$i=1$ \KwTo $i=N$} {
				$ind_i \gets F_{k_{perm}}(id(v_i))$  \;
				$T[ind_i] \gets (m_{i1},m_{i2}, \ldots, m_{iN} ) $  \;
				$T'[ind_i] \gets (m'_{i1},m'_{i2}, \ldots, m'_{iN} ) $  \;
			} 
			\Return $(T, T')$  \;
		\end{algorithm}%	
	\end{minipage}%
	\begin{minipage}{.02\textwidth}
		\
	\end{minipage}%
	\begin{minipage}{0.51\textwidth}
		\begin{algorithm}[H] \DontPrintSemicolon
			\caption{$\mathtt{LPQueryII}(\tau _v, T)$} \label{algo2:LPQuery}
			
			$N \gets |T|$; $(i',s ) \gets \tau_v $   \;
			$(m_{i' 1},m_{i' 2}, \ldots, m_{ i' N} ) \gets T[i'] $  \;
			\For 	{$i = 1$ \KwTo $i = N$}{
				$r \xleftarrow{\$} \{0,1\}^\lambda $  \;
				\eIf {$i \neq i' $ }{
					$(m_{i1},m_{i2}, \ldots, m_{iN} ) \gets T[{i}] $  \;
					$c_{i} \gets e(g,h)^{r} .\prod ^{N} _{k=1} e(m_{i'k},m_{ik}) $ \label{encScoreComp2}  \;
				}{
					$c_{i} \gets e(g,g)^{0}.e(g,h)^{r} $  \;
				}			
				$c_ i = c_i.m'_{i'i}$  \;
			}
			$m \gets \prod_{i = 1}^{i = N} m_{i'i}$  \;
			$\pi _s\gets$ permutation with key $s$.  \;
			$\hat{c} \gets (c_{\pi_s (1) },c_{\pi_s (2) }, \ldots, c_{\pi_s (N) })$   \;
			
			\Return $ \hat{d} $ to PS and $m$ to the client   \;
			
		\end{algorithm}%
	\end{minipage}
\end{figure}
%-------------------------------------

\medskip \noindent
{\bf Query:} As in the previous scheme, the client sends query trapdoor $\tau _v = (i',s)$ to the CS for a vertex $v$. Let $ \hat{c} =(c_{1}, c_{2},\ldots , c_{N}) $ be the set of encrypted scores computed in step \ref{encScoreComp} of Algo.~\ref{algo2:LPQuery}. In addition, for each $i$, $c_ i$ is updated as $c_ i = c_i.m'_{i'i}$. Then $ \hat{c} =(c_{\pi _s (1) },c_{\pi _s (2) }, \ldots, c_{\pi _s (N) }) $ is sent to the PS. 
Instead of sending $\hat{m}$ to the PS, $m = \prod_{i = 1}^{i = N} m_{i'i}$ is sent to the client, which results the encryption of the degree of the vertex $v$. $\mathtt{SLP}$-$\mathtt{II}$ query is described in Algo.~\ref{algo2:LPQuery}.

The PS decrypts $\hat{c}$ as $s'_1, s'_2, \ldots, s'_N$ and sorts them. Then, the PS sends $ (s'_{i_1}, i_1)$, $(s'_{i_2}, i_2)$, $\ldots$, $(s'_{i_N}, i_N) $ where $s'_{i_j}$'s are in sorted order and $i_j$'s are their indices in $\hat{c}$ (see Algo.\ref{algo2:FindMaxVertex}).

The client takes the first index ${i_{res}} = i_j$ such that $s'_{i_j} \leq \deg{v}$. The client gets $\deg{v}$ by decrypting $m$. Finally, the client can find the resulting vertex identifier $v_{res}$ as $v_{res} \gets \pi _s ^{-1} ({i_{res}}) $.

\begin{algorithm} \DontPrintSemicolon
	\caption{$\mathtt{FindMaxVertexII}(sk, \bar{c} ,\bar{m} )$} \label{algo2:FindMaxVertex}
	
	$ (\bar{d}_{1 },\bar{d}_{2}, \ldots, \bar{d}_{N }) \gets \hat{d} $  \;
	\For 	{$i = 1$ \KwTo $i = N$} {
		$s'_i \gets \mathtt{BGN.Decrypt _{\mathbb{G}_1}}(pk,sk,\bar{d}_i)$ \label{s_dashed} \; 
	}	
	Sorting $s'_i$s gets ($ (s'_{i_1}, i_1)$, $(s'_{i_2}, i_2)$, $\ldots,(s'_{i_N}, i_N) $)  \;
	
	\Return ($ (s'_{i_1}, i_1)$, $(s'_{i_2}, i_2)$, $\ldots,(s'_{i_N}, i_N) $) \;
	
\end{algorithm}%

%-------------------------------

\medskip
\noindent
{\bf Correctness:}	For all $i$, the decrypted entry $s'_{i}$ (line \ref{s_dashed}, Algo.~\ref{algo2:FindMaxVertex}) is equals to $s_i + b_{i'i}$ where $s_i$ is the actual score. Since $s_i \leq \deg{v}$ and $b_{i'i}$ is zero, when $v_{i'}$ and $v_{i}$ are connected, we can see that, $s'_{i}$ becomes greater than $\deg {v}$ when $v_{i'}$ and $v_{i}$ are connected. So, the client can eliminate these entries from the list.

\subsection{Security Analysis} $\mathtt{SLP}$-$\mathtt{II}$ does not leak any extra information to the CS than $\mathtt{SLP}$-$\mathtt{I}$. The leakage $ \mathcal{L} =(\mathcal{L}_{bld}, \mathcal{L}_{qry})$ is same as it is in $\mathtt{SLP}$-$\mathtt{I}$.
\begin{theorem}
	If $\mathtt{BGN}$ is semantically secure and $F$ is a PRP, then $\mathtt{SLP}$-$\mathtt{II}$ is $\mathcal{L} $-secure against adaptive chosen-query attacks.
\end{theorem}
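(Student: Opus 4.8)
The plan is to mirror the simulation-based argument used for $\mathtt{SLP}$-$\mathtt{I}$ in Theorem~\ref{th:security1}, since the leakage profile $\mathcal{L} = (\mathcal{L}_{bld}, \mathcal{L}_{qry})$ is unchanged and the only structural difference is the presence of the second encrypted matrix $M'$ (stored in $T'$) produced by $\mathtt{EncMatrixII}$ (Algo.~\ref{algo2:EncMatrix}). As in the earlier proof, I would treat the cloud server as the only adversary, since the proxy server receives neither $(T, T')$ nor the trapdoors. It therefore suffices to construct a PPT simulator $\mathcal{S}$ that, from $\mathcal{L}_{bld}(G) = \{|T|\}$ and the query leakage $\mathcal{L}_{qry}$, produces a view $(\widetilde{T}, \widetilde{T'}, \widetilde{\tau}_{v_1}, \ldots, \widetilde{\tau}_{v_q})$ indistinguishable from the real one in the sense of Definition~\ref{def:cqa2slp}.

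First I would have $\mathcal{S}$ run $\mathtt{BGN}.\mathtt{Gen}(1^\lambda)$ to obtain its own key pair and set up two empty structures $\widetilde{T}$ and $\widetilde{T'}$ of size $|T|$. It fills every entry of $\widetilde{T}$ with $\mathtt{BGN}.\mathtt{Encrypt}_\mathbb{G}(\mathcal{PK}.pk, 0^\lambda)$ and every entry of $\widetilde{T'}$ with $\mathtt{BGN}.\mathtt{Encrypt}_{\mathbb{G}_1}(\mathcal{PK}.pk, 0^\lambda)$, thereby simulating both the entries of $M$ (which live in $\mathbb{G}$) and those of $M'$ (which live in $\mathbb{G}_1$). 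For trapdoor simulation I would reuse the dictionary-based approach of Theorem~\ref{th:security1}: on the query leakage for $v$, $\mathcal{S}$ checks a dictionary $Q$, returns a freshly sampled $\log N$-bit string for $i'$ (recording it in $Q$) when $v$ is new and the stored value otherwise, and draws a fresh $s \xleftarrow{\$} \{0,1\}^\lambda$ for the $\pi_s$-key component. This preserves consistency across repeated queries exactly as $\mathcal{L}_{qry}$ requires.

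The indistinguishability argument then proceeds by a short hybrid over the two simulated structures. Replacing the real $M$ by all-zero encryptions in $\mathbb{G}$ is indistinguishable by semantic security of $\mathtt{BGN}$ in $\mathbb{G}$, exactly as in Theorem~\ref{th:security1}; replacing the real $M'$ by all-zero encryptions in $\mathbb{G}_1$ is likewise indistinguishable by semantic security of $\mathtt{BGN}$ in $\mathbb{G}_1$. Because $M$ and $M'$ are encrypted with independent randomness, a two-step hybrid bounds the total distinguishing advantage by the sum of the two negligible $\mathtt{BGN}$ advantages. The PRP property of $F$ then guarantees that the simulated $i'$ values are indistinguishable from the real $F_{k_{perm}}(id(v))$ outputs, while the $s$ components are identically distributed in both worlds since each is a uniform $\lambda$-bit string.

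The main obstacle I anticipate is justifying that the extra structure $T'$ genuinely adds no leakage, even though the underlying matrix $B$ encodes degree-dependent thresholds satisfying $\deg v_i < t < N - \deg v_i$ that are structurally sensitive. The key observation is that the cloud server only ever performs homomorphic operations on $T'$ (the update $c_i \cdot m'_{i'i}$ in Algo.~\ref{algo2:LPQuery}) and never decrypts, and the combined result $\hat{c}$ is forwarded to the proxy rather than returned to the adversary; hence the adversary's entire view of $M'$ consists of its ciphertexts, which semantic security in $\mathbb{G}_1$ renders indistinguishable from encryptions of $0$. Once this point is settled, the remainder is a routine repetition of the $\mathtt{SLP}$-$\mathtt{I}$ argument.
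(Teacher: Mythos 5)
Your proposal is correct and follows essentially the same route as the paper's own proof: extend the $\mathtt{SLP}$-$\mathtt{I}$ simulator of Theorem~\ref{th:security1} by additionally filling $\widetilde{T}'$ with $\mathtt{BGN}.\mathtt{Encrypt}_{\mathbb{G}_1}(\mathcal{PK}.pk, 0^{\lambda})$ ciphertexts, and reduce indistinguishability to the semantic security of $\mathtt{BGN}$ (in both groups) plus the PRP property of $F$. Your explicit two-step hybrid over $M$ and $M'$ and your remark that the adversary's view of $M'$ is ciphertext-only merely spell out details the paper leaves implicit.
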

\begin{proof}
	As we have seen the proof of Theorem~\ref{th:security1}, The simulator requires to simulate the $ {T} $, $ {T'} $ and $\tau_v $. To simulate the structure $T'$, given $ \mathcal{L}_{bld} (A)$, $\mathcal{S}$ takes an empty structure $\widetilde{T}'$ of size $|T'|$. Finally, it takes $\widetilde{m'_{ij}} \gets \mathtt{BGN}.\mathtt{Encrypt_{\mathbb{G}_1}} ( \mathcal{PK}.pk, 0^{\lambda})$, $(i, j )\in [N] \times [N]$. Rest of the proof is similar as that of Theorem~\ref{th:security1}. 
\end{proof}

%8888888888888888888888888888888888888888888888888888888888888888888888888888888888888888
%8888888888888888888888888888888888888888888888888888888888888888888888888888888888888888 
%\newpage
\section{ $\mathtt{SLP}$ scheme using garbled circuit ($\mathtt{SLP}$-$\mathtt{III}$)} \label{sec:SLP3}
In $\mathtt{SLP}$-$\mathtt{II}$, the PS is still able to get scores with many vertices and there is a good amount of communication cost from PS to the client. In this section, we propose  $\mathtt{SLP}$-$\mathtt{III}$ in which PS does not get any scores. Besides, the proxy needs to send only result  to the client which reduces communication overhead for the client.

%==========================================================
\subsection{Protocol Description}
In $\mathtt{SLP}$-$\mathtt{III}$, after generating the keys, the client  encrypts the adjacency matrix of the graph and uploads it to the CS. At the same time, it shares a part of its secret key with the PS. In the query phase, the CS computes the encrypted scores on receiving query trapdoor from the client. However, it masks each score with random number selected by itself before sending them to the PS. The PS decrypts the masked scores and evaluates a garbled circuit, constructed by the CS (as described in Section~\ref{ss:mgc}),  to find the vertex with maximum score. 
Finally, the PS returns the index corresponding to the evaluated identifier of the vertex with maximum score.
%-------------------------------------------------

\medskip
\noindent
{\bf Key Generation:} Same as Algo.~\ref{algo4:keyGen}. 

%-------------------------------------------------

\medskip
\noindent
{\bf Data Encryption:} Same as Algo.~\ref{algo4:EncMatrix}.

\medskip
\noindent
{\bf Query:}
To query for a vertex $v$, the client generates a query trapdoor $t_v = (i',s)$ (see Algo.~\ref{algo4:TrapdoorGen}) and sends it to the CS.
On receiving  $\tau _ v$, the CS computes the encrypted scores $\ (c_1, c_2,\ldots,c_N)$. It then considers the row $T[i'] = (m_{i'1}, m_{i'2},\ldots,m_{i'N})$ corresponding to the queried vertex. Then,  with random $r_i$ and $r'_i$, it computes,
$\bar{c}_i \gets c_{\pi _s (i)}. \mathtt{BGN.Encrypt}_{\mathbb{G}_1}(\mathcal{PK}.pk, r_i)$ and $\bar{m}_i \gets m_{i'{\pi _s(i)}}. \mathtt{BGN.Encrypt}_{\mathbb{G}}(\mathcal{PK}.pk, r'_i)$, for all $i$. 
If the encrypted scores are sent directly, the PS can decrypt the scores directly as it has the partial secret key $sk$. That is why the CS chooses random $r_i$s and $r'_i$s to mask them. 

\begin{algorithm}[H] \DontPrintSemicolon
	\begin{multicols}{2}
		\caption{$\mathtt{LPQueryIII}(\tau _v, T)$} \label{algo5:LPQuery}
		
		$N \gets |T|$; $(i',s ) \gets \tau_v $   \;
		$(m_{i' 1},m_{i' 2}, \ldots, m_{ i' N} ) \gets T[i'] $  \;
		\For 	{$i = 1$ \KwTo $i = N$}{
			\eIf {$i \neq i' $ }{
				$(m_{i1},m_{i2}, \ldots, m_{iN} ) \gets T[{i}] $  \;
				$c_{i} \gets \prod ^{N} _{k=1} e(m_{\tau _vk},m_{ik}) $   \;
			}{
				$r \xleftarrow{\$} \{0,1\}^\lambda $  \;
				$c_{i'} \gets e(g,g)^{0}.e(g,h)^{r} $  \;
			}			
		}
		$\pi _s \gets$ permutation with key $s$.\;
		
		\For 	{$i = 1$ \KwTo $i = N$}{ 
			$r_i, r'_i,x_i, x'_i \xleftarrow{\$} \{0,1\}^\lambda $  \;
			$\bar{c}_i \gets c_{\pi _s (i)}.e(g,g)^{r_i}.e(g,h)^{x_i} $  \;
			$\bar{m}_i \gets m_{i'{\pi _s(i)}}. g^{r'_i}.h^{x'_i}$  \;
		}		
		
		$\hat{c} \gets (\bar{c}_{1 },\bar{c}_{2}, \ldots, \bar{c}_{N })$   \;
		$\hat{m} \gets (\bar{m}_{1 },\bar{m}_{2}, \ldots, \bar{m}_{N })$  \;
		Computes $MGC$  \;
		\Return ($ \hat{c} $, $\hat{m} $, $MGC$) to PS   \;
	\end{multicols}
\end{algorithm}

To find the vertex with highest score, the CS builds a garbled circuit $MGC$ (see Fig.~\ref{fig:maximalCircuit}) as described in Section~\ref{ss:mgc}.
The CS sends $\hat{c} = (\bar{c}_{1}, \bar{c}_{2}, \ldots, \bar{c}_{N})$ and $\hat{m} = (\bar{m}_{1}, \bar{m}_{2}, \ldots, \bar{m}_{N})$ together with a garbled circuit $MGC$. The CS-side algorithm is described in Algo.~\ref{algo5:LPQuery}.

The PS receives $\hat{c}$ and $\hat{m}$. $\forall i$, let $\bar{s}_i$ and $\bar{a}_i$ be the decryption of $\bar{c_i}$ and $\bar {m_i}$ respectively (see Algo.~\ref{algo5:FindMaxVertex}). Then, the PS evaluates $MGC$. During evaluation, the PS gives all $\bar{s}_i$s and $a_i$s and corresponding indices $i$s as input where $ a_i= (\bar{a}_i \bmod 2)$. The CS gives $r_i$s and $r''_i$s where $r''_i = (r'_i \bmod 2)$, $\forall i$ (see Section~\ref{ss:mgc}).  

\begin{algorithm} \DontPrintSemicolon
	\caption{$\mathtt{FindMaxVertexIII}(sk, \hat{c} ,\hat{m} , GC)$} \label{algo5:FindMaxVertex}
	
	$ (\bar{c}_{1 },\bar{c}_{2}, \ldots, \bar{c}_{N }) \gets \hat{c} $  \;
	$ (\bar{m}_{1 },\bar{m}_{2}, \ldots, \bar{m}_{N }) \gets \hat{m}$  \;
	\For 	{$i = 1$ \KwTo $i = N$} {
		$\bar{s}_i \gets \mathtt{BGN.Decrypt}_{\mathbb{G}_1}(pk,sk,\bar{c}_i)$  \;
		$\bar{a}_i \gets (\mathtt{BGN.Decrypt}_{\mathbb{G}}(pk,sk,\bar{m}_i))$  \;
		$a_i \gets \bar{a_i} \bmod 2$
	}	  
	Evaluates $MGC$ with $\bar{s}_i $ and $a_i$s as its inputs.\;
	$i_{res} \gets $ output of the $MGC$ evaluation  \;
	\Return $i_{res} $ to the client  \;
	
\end{algorithm}

From $MGC$, the PS gets an index $i_{res}$ which is sent to the client.
Finally, the client finds the resulting vertex identifier $v_{res}$ as $v_{res} \gets \pi_s ^{-1} ({i_{res}}) $.

%==========================================================

\subsection{Maximum Garbled Circuit (MGC)} \label{ss:mgc}
We want minimum information to be leaked to both the servers. Without the knowledge of values, it is hard to find the maximum value because it is an iterative comparison process and requires several round of communication if we use only secure comparison. However, building a maximum garbled circuit allows cloud and proxy servers to find the maximum without knowing the value by anyone.

Kolesnikov and Schneider~\cite{KolesnikovSS09} first presented a garbled circuit that computes minimum from a set of distance. In their scheme, one party holds a set of points and the second party holds a single point. They used homomorphic encryption to compute the the distances from the single points to the set of points and sort them using the garble circuit. However, \emph{the original value of the points belongs to them were known to them}. 
In this paper, we have introduced a novel maximum garbled circuit ($MGC$) by which one party computes the maximum from a set of numbers, \emph{without the knowledge their values}, with the help of another party without leaking them to it.  
Given a set of scores $MGC$  outputs only the identity of the vertex with maximum score.

%---------------------------------

\noindent
{\bf Computing vertex with max score:} 
In $\mathtt{SLP}$-$\mathtt{III}$, the CS computes a garbled circuit $MGC$ (an example is shown in Fig.~\ref{fig:maximalCircuit}) for each query to find the maximum scored vertex identifier. Before computing $MGC$, in $\mathtt{SLP}$-$\mathtt{III}$, the PS gets $ (\bar{s}_1, \bar{s}_2,\ldots, \bar{s}_N)$ and $ (a_1, a_2,\ldots, a_N)$ (Algo.~\ref{algo5:FindMaxVertex}). The CS keeps $ (r_1, r_2,\ldots, r_N)$ and $ (r''_1, r''_2,\ldots, r''_N)$ which are used as input in $MGC$. During construction, it keeps the indices in the $MGC$ such a way that $MGC$ outputs only the index of the resulted maximum score. 

\begin{figure}[ht]
	\centering
	\includegraphics[width=0.49\textwidth]{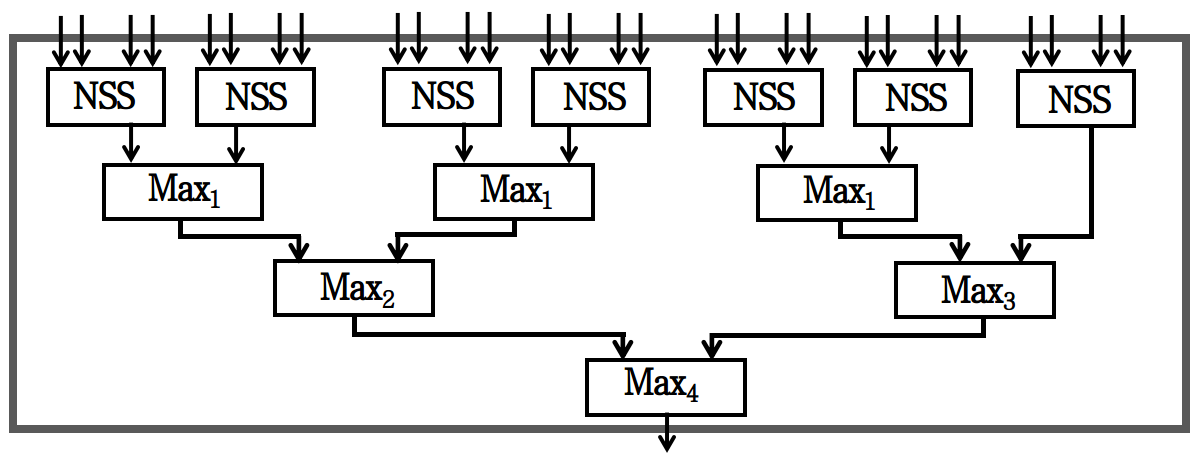}
	\caption{Example of a Maximum circuit with $N = 7$}
	\label{fig:maximalCircuit}
\end{figure} 
$MGC$ is required to find the index corresponding to the maximum scored vertex. The circuit is constructed layer by layer. The idea is to compare pair of scores every time in a layer and pass the result for the next until the resulted vertex is found. If $|V|=N$, $MGC$ has $(\log N +1 )$ layers starting from $0$ to $N$. In the 0th layer, there are $N$ number of $\mathtt{NSS}$ blocks and the rest of the blocks are $\mathtt{Max}$ block. The $\mathtt{NSS}$ blocks is for the 1st layers and computes the scores securely without knowing them. Thus, each $\mathtt{NSS}$ block corresponds to some vertex. $\mathtt{Max}$ computes the maximum score and corresponding index without knowing them.  Example of a $MGC $, to compute maximum, assuming $N = 7$ and using $\mathtt{Max}$ blocks and $\mathtt{NSS}$ blocks, is shown in Fig.~\ref{fig:maximalCircuit}. $MGC$ for any $N$ is constructed similarly.

\begin{figure}[ht]
	\centering
	\begin{subfigure}[t]{0.24\textwidth}
		\includegraphics[width=0.9\textwidth]{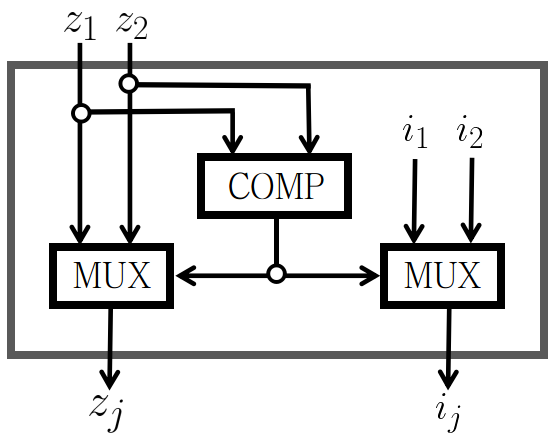}
		\caption{$\mathtt{Max}_1$ block}
		\label{fig:max1}
	\end{subfigure}%
	\begin{subfigure}[t]{0.24\textwidth}
		\includegraphics[width=0.9\textwidth]{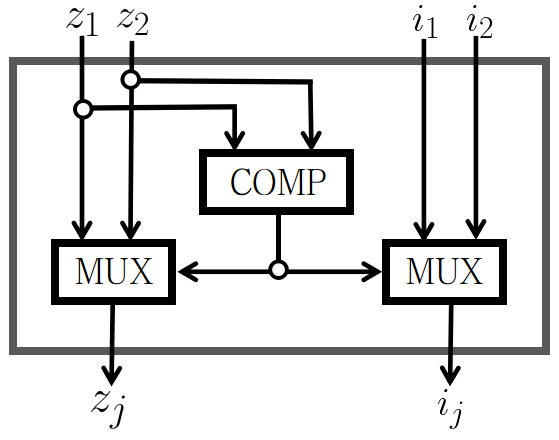}
		\caption{$\mathtt{Max}_2$ block}
		\label{fig:max2}
	\end{subfigure}
	\begin{subfigure}[t]{0.24\textwidth}
		\includegraphics[width=0.9\textwidth]{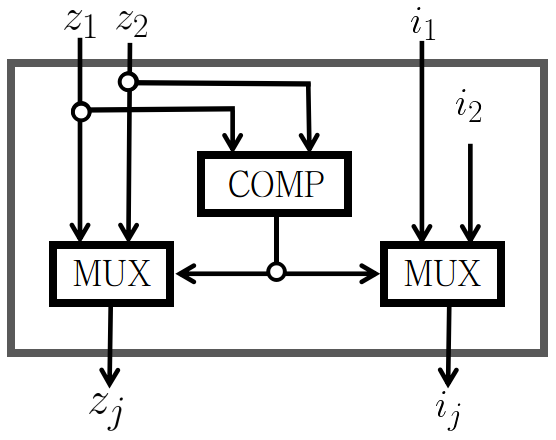}
		\caption{$\mathtt{Max}_3$ block }
		\label{fig:max3}
	\end{subfigure}%
	\begin{subfigure}[t]{0.24\textwidth}
		\includegraphics[width=0.9\textwidth]{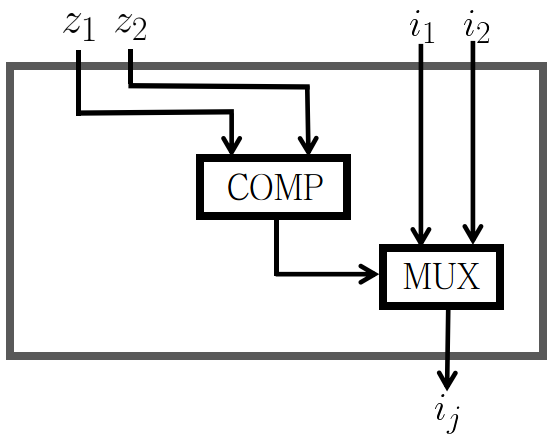}
		\caption{$\mathtt{Max}_4$ block}
		\label{fig:max4}
	\end{subfigure}%
	\caption{Different max blocks used in $\mathtt{MAXIMUM}$ circuit}
	\label{fig:maxBlocks}
\end{figure}
%---------------------------------

\noindent
{\bf Max blocks}
There are 4-types of $\mathtt{Max} $ blocks to compute the maximum- $\mathtt{Max_1}$, $\mathtt{Max_2}$, $\mathtt{Max_3}$ and $\mathtt{Max_4}$ (see Fig.~\ref{fig:maxBlocks}). The blocks are made different to handle extreme cases. These blocks use $\mathtt{COMP}$ and $\mathtt{MUX }$ blocks (see Section~\ref{ss:SecureComputations}).

%---------------------------------

\noindent
{\bf NSS blocks:}
Each $\mathtt{NSS}$ block has four inputs $\bar{s}_i$, $r_i$, $a_i$ and $r''_i$. The inputs $r_i$ and $r''_i$ comes from the CS while $\bar{s}_i$ and $a_i$ comes from the PS. It first subtracts  $r_i$ from $\bar{s}_i$ using $\mathtt{SUB}$ block to get the score $s_i$. Then, using $\mathtt{SUB}'$ block, it finds the flag bit  that tells whether the vertex is adjacent to the queried vertex.   
$\mathtt{MUL}$ block (see Fig~\ref{fig:mul}) is used in $\mathtt{NSS}$ block as shown in Fig.~\ref{fig:nss} to make the score $s_i$ zero if the vertex is adjacent else keeps the score $s_i$ same. 

\begin{figure}[ht]
	\centering
	\begin{subfigure}[t]{0.29\textwidth}
		\centering
		\includegraphics[scale=0.12]{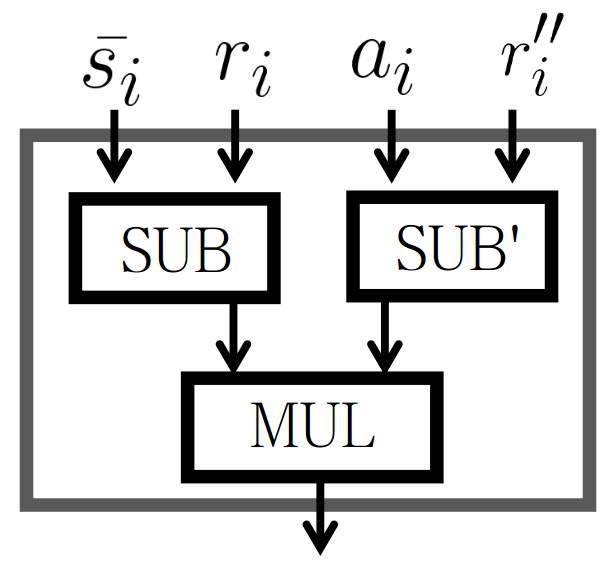}
		\caption{$\mathtt{NSS}$ block}
		\label{fig:nss}
	\end{subfigure}
	\begin{subfigure}[t]{0.44\textwidth}
		\centering
		\includegraphics[scale=0.23]{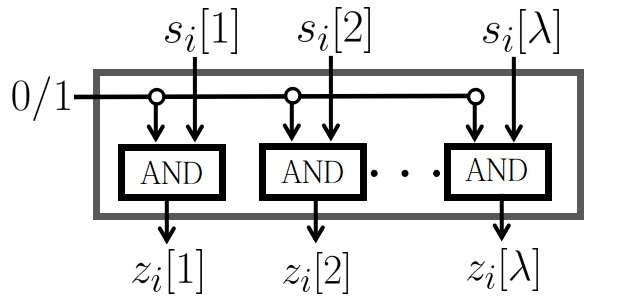}
		\caption{$\mathtt{MUL}$ block}
		\label{fig:mul}
	\end{subfigure}
	\begin{subfigure}[t]{0.24\textwidth}
		\centering
		\includegraphics[scale=0.22 ]{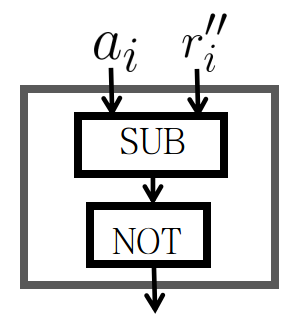}
		\caption{$\mathtt{SUB}'$ block}
		\label{fig:sub2}
	\end{subfigure}
	\caption{Few circuit blocks}
	\label{fig:fewBlocks}
\end{figure}
%-------------------------------------------------

%\medskip
\noindent
{\bf Elimination of scores for adjacent vertices:}
It can be seen from encryption that $\bar{s}_i = s_i + r_i$, where $s_i$ is the actual score corresponding to $i$th row and $r_i$ randomizes the score. Each bit $r''_i$ is taken to indicate whether $r'_i$ is odd or even. On the other hand, each bit $a_i$ indicates whether the decrypted $\bar{a}_i$ is odd or even. Inequality of $r''_i$ and $a_i$ indicates that the vertex corresponding to $i$th row is connected with the queried vertex. In that case, we consider the score $s_i=0$.

The block $\mathtt{SUB}'$, in Fig.~\ref{fig:sub2}, finds outputs $1$ if they are equal, else outputs $0$. 
Since, $(\bar{s}_i - r_i) $ gives the score, $\mathtt{SUB}$ block (see Section.~\ref{ss:SecureComputations}) is used in $MGC$ to compute the scores where the PS gives $\bar{s}_i$ and CS gives $r_i$. 
It can be seen that $\mathtt{SUB}'$ subtract only one bit which is very efficient.

%==========================================================
\subsection{Security Analysis} In $\mathtt{SLP}$-$\mathtt{III}$, though the PS has almost no leakage, the CS has a little more leakage than $\mathtt{SLP}$-$\mathtt{I}$. This extra leakage occurs when it interacts with the PS through OT protocol to provide encoding corresponding to the input of PS. Since OT is secure and does not leak any meaningful information, we can ignore this leakage. In $\mathtt{SLP}$-$\mathtt{III}$,
the leakage $ \mathcal{L} =(\mathcal{L}_{bld}, \mathcal{L}_{qry})$ is same as it is in $\mathtt{SLP}$-$\mathtt{I}$.
\begin{theorem}
	If $\mathtt{BGN}$ is semantically secure and $F$ is a PRP, then $\mathtt{SLP}$-$\mathtt{III}$ is $\mathcal{L} $-secure against adaptive chosen-query attacks.
\end{theorem}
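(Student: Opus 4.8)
The plan is to follow the real-ideal simulation paradigm of Definition~\ref{def:cqa2slp}, exactly as in the proof of Theorem~\ref{th:security1} and its $\mathtt{SLP}$-$\mathtt{II}$ analogue. The crucial observation is that $\mathtt{SLP}$-$\mathtt{III}$ reuses $\mathtt{KeyGen}$ (Algo.~\ref{algo4:keyGen}), the data-encryption step (identical to Algo.~\ref{algo4:EncMatrix}), and $\mathtt{TrapdoorGen}$ (Algo.~\ref{algo4:TrapdoorGen}) from $\mathtt{SLP}$-$\mathtt{I}$, and its leakage profile $\mathcal{L}=(\mathcal{L}_{bld},\mathcal{L}_{qry})$ is unchanged. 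Hence, for the components that the $\mathtt{CQA2}$ game actually exposes to the adversary, namely the structure $T$ and the sequence of trapdoors, the simulator $\mathcal{S}$ can be taken verbatim from the proof of Theorem~\ref{th:security1}.

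Concretely, first $\mathcal{S}$ runs $(\mathcal{SK},\mathcal{PK})\gets\mathtt{BGN}.\mathtt{Gen}(1^\lambda)$, and given $\mathcal{L}_{bld}(A)=\{|T|\}$ it allocates an empty structure $\widetilde{T}$ of size $|T|$ whose every entry is set to $\widetilde{m_{ij}}\gets\mathtt{BGN}.\mathtt{Encrypt_\mathbb{G}}(\mathcal{PK}.pk,0^\lambda)$ for $(i,j)\in[N]\times[N]$. For trapdoors, $\mathcal{S}$ keeps a dictionary $Q$; on leakage $\mathcal{L}_{qry}(v)$ it returns a fresh random $\log N$-bit string if $v$ is new (recording it in $Q$) and returns $Q[v]$ otherwise, which keeps repeated queries consistent under adaptive access. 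Indistinguishability of $\widetilde{T}$ from $T$ then follows from the semantic security of $\mathtt{BGN}$, and indistinguishability of each $\widetilde{\tau_v}$ from the real $\tau_v$ follows because $F$ is a PRP; together these bound the distinguishing advantage by $negl(\lambda)$.

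The one genuinely new ingredient is the query-phase interaction created by the maximum garbled circuit $MGC$ and the oblivious-transfer step in which the CS, acting as garbler, hands the PS the input-wire labels for the decrypted scores and flag bits. I would handle this by a standard hybrid argument layered on top of the above simulation: invoking the simulation-based security of the garbled-circuit construction of~\cite{KolesnikovS08,KolesnikovSS09} together with the security of the underlying OT, the transcript that the CS sees during each evaluation can be replaced by a simulated transcript independent of the PS's private inputs. Because the garbler's view in a secure OT reveals nothing about the receiver's selection, this replacement is indistinguishable and composes with the structure-and-trapdoor simulation across all $q=poly(\lambda)$ adaptive queries.

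The main obstacle, therefore, is not the encrypted-structure part (which is immediate from Theorem~\ref{th:security1}) but arguing cleanly that the $MGC$/OT transcript adds no distinguishing power to the CS adversary. The delicate point is compositional: one must ensure that the simulated OT transcripts remain consistent with the simulated trapdoors and $\widetilde{T}$ simultaneously over the whole adaptive query sequence, so that the two hybrids---the semantic-security/PRP replacement and the OT/GC replacement---can be chained without the adversary detecting a seam. Once the standard OT and garbled-circuit simulators are invoked as black boxes, the remaining steps are routine.
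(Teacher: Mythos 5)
Your proposal is correct and takes essentially the same approach as the paper: the paper's proof of this theorem is literally one line---``the proof is the same as that of Theorem~\ref{th:security1}''---i.e., it reuses the $\mathtt{SLP}$-$\mathtt{I}$ simulator for $T$ and the trapdoors exactly as you do, and it dismisses the $MGC$/OT interaction in the prose immediately preceding the theorem by asserting that OT is secure and leaks nothing meaningful to the CS. Your explicit hybrid invoking the garbled-circuit and OT simulators makes rigorous what the paper only asserts informally, so your argument is, if anything, more complete than the published proof.
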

\begin{proof}
	The proof is the same as that of Theorem~\ref{th:security1}. 
\end{proof} 

\subsection{Basic Queries}
All the three schemes support basic queries which includes neighbor query, vertex degree query and adjacency query. 
%--------------------

\medskip \noindent{\bf Neighbor query:}
Given a vertex,  neighbor query is to return the set of vertices adjacent to it. 
It is to be noted that, since we have encrypted adjacency matrix of the graph, it is enough for the client if it gets the decrypted row corresponding to the queried vertex, 

To query neighbor for a vertex $v$, the client generates $\tau_{v}= (i', s)$ as in Algo.~\ref{algo4:TrapdoorGen} and sends it to the CS. The CS permutes rows corresponding to row $i'$ and send the permuted row $\hat{m} \gets (m_{\pi_s (1) },m_{\pi_s (2) }, \ldots, m_{\pi_s (N) })$ to the PS. The PS decrypts them and send the decrypted vector $(a_1, a_2, \ldots, a_N)$ to the client. The client can compute inverse permutation for the entries for which the the entries are 1. Here, the CS gets only the queried vertex and the PS gets the degree of the vertex.  

%--------------------

\medskip \noindent{\bf  Vertex degree query:}
To query degree of a vertex $v$, similarly, the client sends $\tau_{v}= i'$ to the CS. The CS computes encrypted degree as $m \gets \prod_{i = 1}^{i = N} m_{i'i}$ and sends $m$ to the proxy. The proxy decrypts $m$ and sends the result to the client.  
$s$ is not needed as permuting the elements of some row is not required. 

Here, the degree is leaked to the PS which can be prevented by randomizing the result. The CS can randomize the encrypted degree and send the randomization secret to the client. The client can get the degree just by subtracting the randomization from the result by the PS.    

However, this leakage can be avoided easily, without randomizing the encrypted degree,  if the client performs the decryption.
%--------------------

\medskip \noindent{\bf Adjacency Query:}
Given two vertices, adjacency query (edge query) tells wither there is an edge between them. If the client wants to perform adjacency query for the pair of vertices $v_1$ and $v_2$, the client sends $(i'_1, i'_2)$ (as generated in Algo.~\ref{algo4:TrapdoorGen}) to the CS. The CS returns $m_{{i'_1}{ i'_2}}$. The client can get either  the randomized result from the  PS or it can decrypt $m_{{i'_1}{ i'_2}}$ by itself.

%8888888888888888888888888888888888888888888888888888888888888888888888888888888888888888
%8888888888888888888888888888888888888888888888888888888888888888888888888888888888888888
%\newpage
\section{Performance Analysis} \label{sec:PerformanceAnalysis}

In this section, we discuss the efficiency of our proposed schemes. The efficiency is measured in terms of computations and communication complexities together with storage requirement and allowed leakages. A summary is given in Table~\ref{tab:comparison}.
Since there is no work on the secure link prediction before, we have not compared complexities of our schemes with any other similar encrypted computations.

\subsection{Complexity analysis}
Let the graph be $G= (V,E)$ and $N = |V|$. Let $\mathtt{BGN}$ encryption outputs $\rho$-bit string for every encryption. We describe the complexities as bellow.

\medskip
\noindent {\bf Leakage Comparison:}\label{ss:LeakageComparison} 
As we see the Table~\ref{tab:comparison}, each scheme leaks, to the CS, same amount of information  which is the number of vertices of the graph and the query trapdoors. However, none of the schemes leaks information about the edges in the graph to the CS. 
In $\mathtt{SLP}$-$\mathtt{I} $, since the PS has the power to decrypt the scores, it gets to know $ S_{v} = \{score(v, u): u \in V\}$.
However, $\mathtt{SLP} $-$ \mathtt{II}$ reveals only a subset $ S'_{v} $ of $ S_{v} $ and $\mathtt{SLP}$-$\mathtt{III}$ manages to hide all scores from the PS. $\mathtt{SLP}$-$\mathtt{I}$ can not hide scores from the PS which results in maximum leakage to the PS.

\begin{table}[!htbp] 
	\centering
	\caption{Complexity Comparison Table}		\label{tab:comparison}
	\vspace{6pt}
	\resizebox{\textwidth}{!}{
		\begin{tabular}{|c|c|c|c|c|}\hline
			Param & Entity & $\mathtt{SLP}$-$\mathtt{I}$ & $\mathtt{SLP}$-$\mathtt{II}$ & $\mathtt{SLP}$-$\mathtt{III}$ \\ \hline \hline
			Leakage 	& CS & $|V|$, $\tau_{v_1},\tau_{v_2},\ldots$ &  $|V|$, $\tau_{v_1},\tau_{v_2},\ldots$  & $|V|$, $\tau_{v_1},\tau_{v_2},\ldots$  \\ \cline{2-5}
			& PS & $S_{v},i_{res}$ & $S'_{v},i_{res}$& $i_{res}$ \\ 
			\hline 
			& client & $\lambda $ bits& $\lambda$ bits& $\lambda$ bits\\ \cline{2-5}
			Storage & CS & $|V|^2\rho$ bits& $2|V|^2\rho$ bits& $|V|^2\rho$ bits\\ \cline{2-5}
			& PS & $\rho $ bits& $\rho $ bits& $\rho
			$ bits\\ 
			\hline 
			& client& $|V|^2(\mathsf{M}+\mathsf{A})$ & $|V|^2(\mathsf{M}+\mathsf{A}+\mathsf{M_1}+\mathsf{A_1})$ & $|V|^2(\mathsf{M}+\mathsf{A})$ \\ \cline{2-5}
			Compu-	& CS & $|V|^2$ $\mathsf{P}$ + $|V|$ $\mathsf{E}$ & $|V|^2$ $\mathsf{P}$ + & $|V|^2$ $\mathsf{P}$ + $4|V|$  $\mathsf{E}$ \\ 
			tation& 		& + ($|V|^2+ |V|$) $\mathsf{M}$ & ($|V|^2+ 2|V|$) $\mathsf{M}$ & + ($|V|^2+ 3|V|$) $\mathsf{M}$ +\\
			& 		& 		& 	& $MGC_{const}{(\log |V|,|V|)}$\\ \cline{2-5}
			& PS & $|V|log|V| (\mathsf{M+C}+\mathsf{M_1+C_1})$ & $|V| (\mathsf{M_1+C_1})$  +& $|V| (\mathsf{M+C}+\mathsf{M_1+C_1})$+ \\
			& 		& +$|V|log|V| \mathsf{C}$ & +$|V| log|V| \mathsf{C}$ & $MGC_{eval}{(\log |V|,|V|)}$ \\ 
			\hline 
			& client$\rightarrow$CS & $|V|^2 \rho$ bits& $2 |V|^2 \rho$ bits & $|V|^2\rho$ bits \\ \cline{2-5}
			Commu-		& CS$\rightarrow$PS& $2|V|\rho$ bits & $|V|\rho$ bits & $2|V|\rho$ bits + $|V|OT ^{(\log |V| +1)}_{snd}$+ \\
			nication& & & & $MGC_{size}{(\log |V|,|V|)}$ bits \\ \cline{2-5}
			& PS$\rightarrow$CS& - & - & $|V|OT ^{(\log |V| +1)}_{rcv}$\\ \cline{2-5}
			& PS$\rightarrow$client & $\log |V|$ bits&$2|V| \log |V|$ bits& $\log |V|$ bits\\ \hline 
			
		\end{tabular}
	}
	\begin{flushleft}
		$S_{v}$ - Set of scores of $v$ with all other vertices,
		$S'_{v} $- a subset of $ S_{v}$,
		$\rho $- length of elements in $\mathbb{G}$ or $\mathbb{G}_1$,
		$\mathsf{C}$- comparison in $\mathbb{G}$,
		$\mathsf{C_1}$- comparison in $\mathbb{G}_1$,
		$\mathsf{M}$- multiplication in $\mathbb{G}$,
		$\mathsf{M_1}$- multiplication in $\mathbb{G}_1$,
		$\mathsf{E}$- exponentiation in $\mathbb{G}$,
		$\mathsf{E_1}$- exponentiation in $\mathbb{G}_1$,
		$\mathsf{P}$- pairing/ bilinear map computation,
		$MGC_{size}{(\log |V|,|V|)}$- size of $MGC$ with $|V|$ $\log |V|$-bit inputs,
		$MGC_{const}{(\log |V|,|V|)}$- $MGC$ contraction with $|V|$ $\log |V|$-bit inputs, 
		$MGC_{eval}{(\log |V|,|V|)}$- $MGC$ evaluation with $|V|$ $\log |V|$-bit inputs,
		$OT ^{(\log |V| +1)}_{snd}$- information to send for $(\log |V|+1) $-bit $OT$,
		$OT ^{(\log |V| +1)}_{rcv}$- information to receive for $\log |V| $-bit $OT$.
	\end{flushleft}
\end{table}
\medskip
\noindent {\bf Storage Requirement:} \label{ss:StorageRequirement} 
One of the major goals of secure link prediction scheme is that the client should require very little storage. All our designed schemes have very low storage requirement for the client. The client has to only store a key which is of $\lambda$ bits. For all schemes, the PS stores only a part of the secret key which is of $\lambda$ bits. 

In $\mathtt{SLP}$-$\mathtt{I}$, the CS is required to store $|V|^2\rho$ bits for the structure $T$ where the PS is required to store only the secret key. 
While reducing the leakage in $\mathtt{SLP}$-$\mathtt{II}$, the CS storage becomes doubled. However, $\mathtt{SLP}$-$\mathtt{III}$ requires the same amount of storage as $\mathtt{SLP}$-$\mathtt{I}$.

\medskip
\noindent {\bf Computation Complexity:} \label{ss:ComputationComplexity}%
In all schemes, the client computes $|V|^2$ number of $\mathtt{BGN}$ encryption to encrypt $A$ while $\mathtt{SLP}$-$\mathtt{II}$ additionally computes $|V|^2$ number of the same to encrypt $B$. To compute each of $|V|$ encrypted scores, the CS requires $|V|$  bilinear map ($e$) computation and $|V|$  multiplications.

Additionally, $\mathtt{SLP}$-$\mathtt{I}$ randomizes the encrypted entries corresponding to the  row that has been queried. This requires $|V|$ exponentiations and $|V|$ multiplications. $\mathtt{SLP}$-$\mathtt{II}$ randomizes the encrypted scores. This requires $|V|$ multiplications and computes the encrypted degree of the queried vertex which requires $|V|$ multiplications. Apart from  computations of encrypted scores, in $\mathtt{SLP}$-$\mathtt{III}$, the CS computes a garbled circuit $MGC$.

In all, the PS decrypts $|V|$ scores. Each decryption requires $\log|V|$  multiplications on average. To find the vertex with maximum score, in $\mathtt{SLP}$-$\mathtt{I}$ and $\mathtt{SLP}$-$\mathtt{II}$, the PS compares $|V|$ numbers. The $|V|$ encrypted entries are decrypted by the PS in $\mathtt{SLP}$-$\mathtt{I}$ and $\mathtt{SLP}$-$\mathtt{III}$. In addition, the PS evaluates the garbled circuit $MGC$ in $\mathtt{SLP}$-$\mathtt{III}$.

\medskip
\noindent {\bf Communication Complexity:} \label{ss:CommunicationComplexity}
To upload the encrypted matrices, $\mathtt{SLP}$-$\mathtt{I}$ and $\mathtt{SLP}$-$\mathtt{III}$ requires $ |V|^2\rho$ bits and $\mathtt{SLP}$-$\mathtt{II}$ requires $ 2|V|^2\rho$ bits of communications. To query, it sends only the trapdoor of size $2\rho$ bits (aprx.). 

The CS sends $2|V|$ entries to the PS, in case of $\mathtt{SLP}$-$\mathtt{I}$ and $\mathtt{SLP}$-$\mathtt{III}$. For $\mathtt{SLP}$-$\mathtt{II}$, the CS sends only $|V|$ entries. Each of these entries is of $\rho$ bits. In addition, $\mathtt{SLP}$-$\mathtt{III}$ sends the garbled circuit $MGC$.
PS to CS communication happens only when the PS evaluates $MGC$. 
For $\mathtt{SLP}$-$\mathtt{I}$ and $\mathtt{SLP}$-$\mathtt{III}$, the PS sends only $i_{res}$ which is of $\log |V|$ bits to the client. However, the PS sends $2|V| \log |V|$ bits to the client. 

\medskip
\noindent {\bf Complexity for GC Computation:} 
It can be observed that $\log |V|$-bit $\mathtt{SUB}$, $1$-bit $\mathtt{SUB}'$, $\log |V|$-bit $\mathtt{MUL}$, $\log |V|$-bit $\mathtt{COMP}$ and $\log |V|$-bit $\mathtt{MUX}$ blocks consist of ($4\log |V|$ XOR-gates and $\log |V|$ AND-gates), ($4$ XOR-gates and $1$ AND-gate), ($\log |V|$ AND-gates), ($3\log |V|$ XOR-gates and $\log |V|$ AND-gates) and ($2\log |V|$ XOR-gates and $\log |V|$ AND-gates) respectively. Thus, $\log |V|$-bit $\mathtt{NSS}$ and $\log |V|$-bit $\mathtt{Max}$ blocks consist of ($(4\log |V|+4)$ XOR-gates and $(2\log |V|+1)$ AND-gates) and ($7\log |V|$ XOR-gates and $3\log |V|$ AND-gates) respectively.

In our designed garbled circuit $MGC$, there are $(|V|-1)$ $\mathtt{Max}$ blocks and $|V|$ $\mathtt{NSS}$ blocks. Thus, $MGC$ requires $|V|(11\log |V|+4)$ XOR-gates and $|V|(5\log |V|+1)$ AND-gates. However, the PS receives $|V|(\log |V| +1)$ bits through OT for the first layer. 

Thus, $MGC_{size}{(\log |V|,|V|)}$ is the size of $|V|(11\log |V|+4)$ XOR-gates and $|V|(5\log |V|+1)$ AND-gates, whereas 
$MGC_{const}{(\log |V|,|V|)}$ and $MGC_{eval}{(\log |V|,|V|)}$ are computational cost to construct and evaluate.

%##########################################################################################
%##########################################################################################
\section{Experimental Evaluation} \label{sec:ExperimentalEvaluation}
In this section, the experimental evaluations of our designed schemes, $\mathtt{SLP}$-$\mathtt{I}$ and $\mathtt{SLP}$-$\mathtt{II}$, are presented. 
In our experiment, we  have used a single machine for both the client and the server. All data has been assumed to be residing in main memory. The machine is with an Intel Core i7-4770 CPU and with 8-core operating at 3.40GHz. It is equipped with 8GB RAM and runs an Ubuntu 16.04 LTS 64-bit operating system. The open source PBC~\cite{PBC} library has been used in our implementation to support $\mathtt{BGN}$. The code is in the repository \cite{slpImp}.

\subsection{Datasets}
For our experiment, we have used real-world datasets. We have taken the datasets from the \emph{SNAP datasets}~\cite{snapnets}. The collection consists of various kinds of real-world network data which includes social networks, citation networks, collaboration networks, web graphs etc. 

\begin{table}[ht]
	\caption{Detail of the graph datasets} \label{tab:dataset} \centering
	\begin{tabular}{l|r|r}
		\textbf{Dataset Name}	& \textbf{\#Nodes} 	& \textbf{\#Edges} \\ \hline \hline
		bitcoin-alpha		& 3,783		& 24,186	\\ \hline
		ego-facebook 		& 4,039		& 88,234	\\ \hline
		email-Enron			& 36,692	& 183,831	\\ \hline
		email-Eu-core		& 1,005		& 25,571	\\ \hline
		Wiki-Vote			& 7,115		& 103,689	\\ 
	\end{tabular}
\end{table}

For our experiment, we have considered the undirected graph datasets- 
\emph{bitcoin-alpha},
\emph{ego-Facebook},
\emph{Email-Enron},
\emph{email-Eu-core} and 
\emph{Wiki-Vote}.
The number of nodes and the edges of the graphs are shown in Table~\ref{tab:dataset}.

Instead of the above graphs, their subgraphs have been considered. First fixed number of vertices from the graph datasets and edges joining them have been chosen for the subgraphs. For example, for 1000, vertices with identifier $<1000$ have been taken for the subgraph.

\subsection{Experiment Results}
In our experiment, five datasets have been taken. The experiment has been done for each dataset taking extracted subgraphs with vertices 50 to 1000 incremented by 50. The number of edges in the subgraphs is shown in Fig.~\ref{fig:subgraphInfo}.
For the pairing, 128, 256 and 512 bits prime-pairs are taken. 
\begin{figure}[htbp]
	\centering
	{\includegraphics[width=0.48\textwidth]{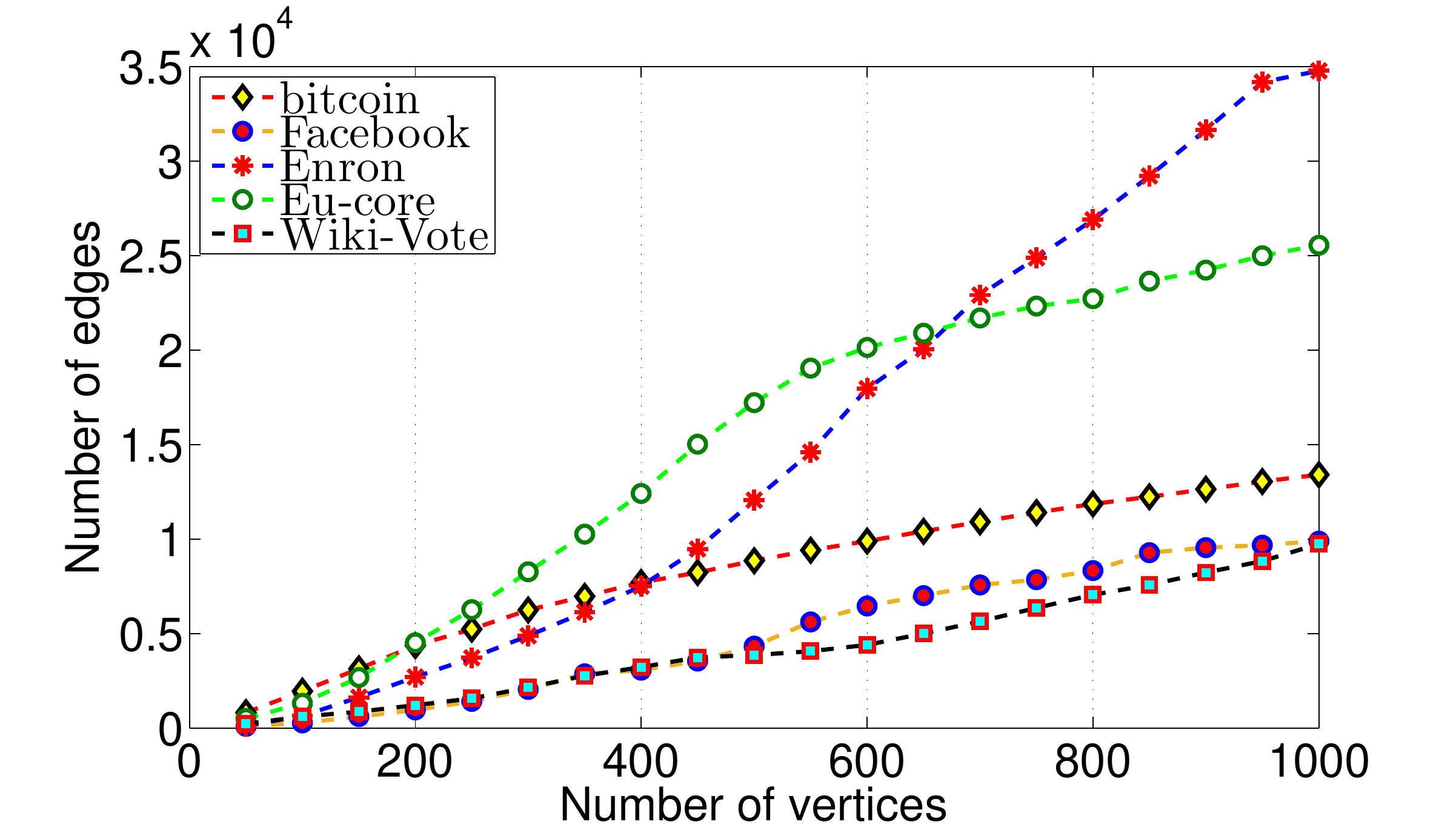}}
	\caption{Number of vertices and edges of the subgraphs}
	\label{fig:subgraphInfo}
\end{figure}
In our proposed schemes, the most expensive operation for the client is encrypting the matrix ($\mathtt{EncMatrix}$). For the cloud and the proxy, score computing ($\mathtt{LPQuery}$) and finding maximum vertex ($\mathtt{FindMaxVertex}$) are the most expensive operations respectively. Hence, throughout this section, we have discussed mainly these three operations.

As we have seen, in the proposed protocols, encrypting each entry of the adjacency matrix is the main operation of the encryption, the number of edges does not affect the encryption time for both $\mathtt{SLP}$-$\mathtt{I}$ and $\mathtt{SLP}$-$\mathtt{II}$. This is because, irrespective of SLP schemes, the number of operations are independent of number of edges. 

\begin{figure}[ht]
	\centering
	\begin{subfigure}[t]{0.33\textwidth}
		\includegraphics[width=0.98\textwidth]{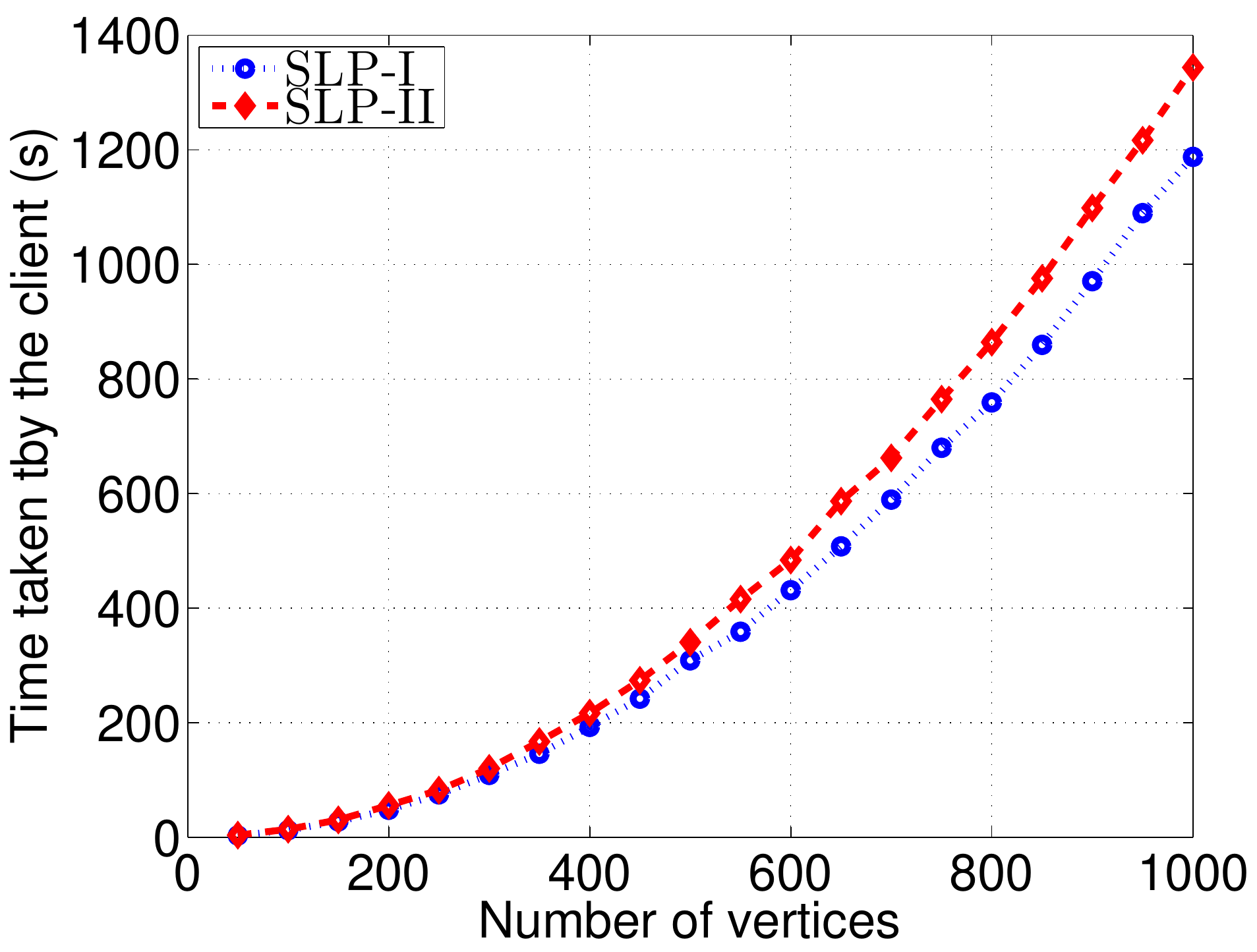}
		\caption{Encryption time taken\\ by the client }
		\label{fig:compClient}
	\end{subfigure}%
	%\bigskip
	\begin{subfigure}[t]{0.33\textwidth}
		\centering
		\includegraphics[width=0.98\textwidth]{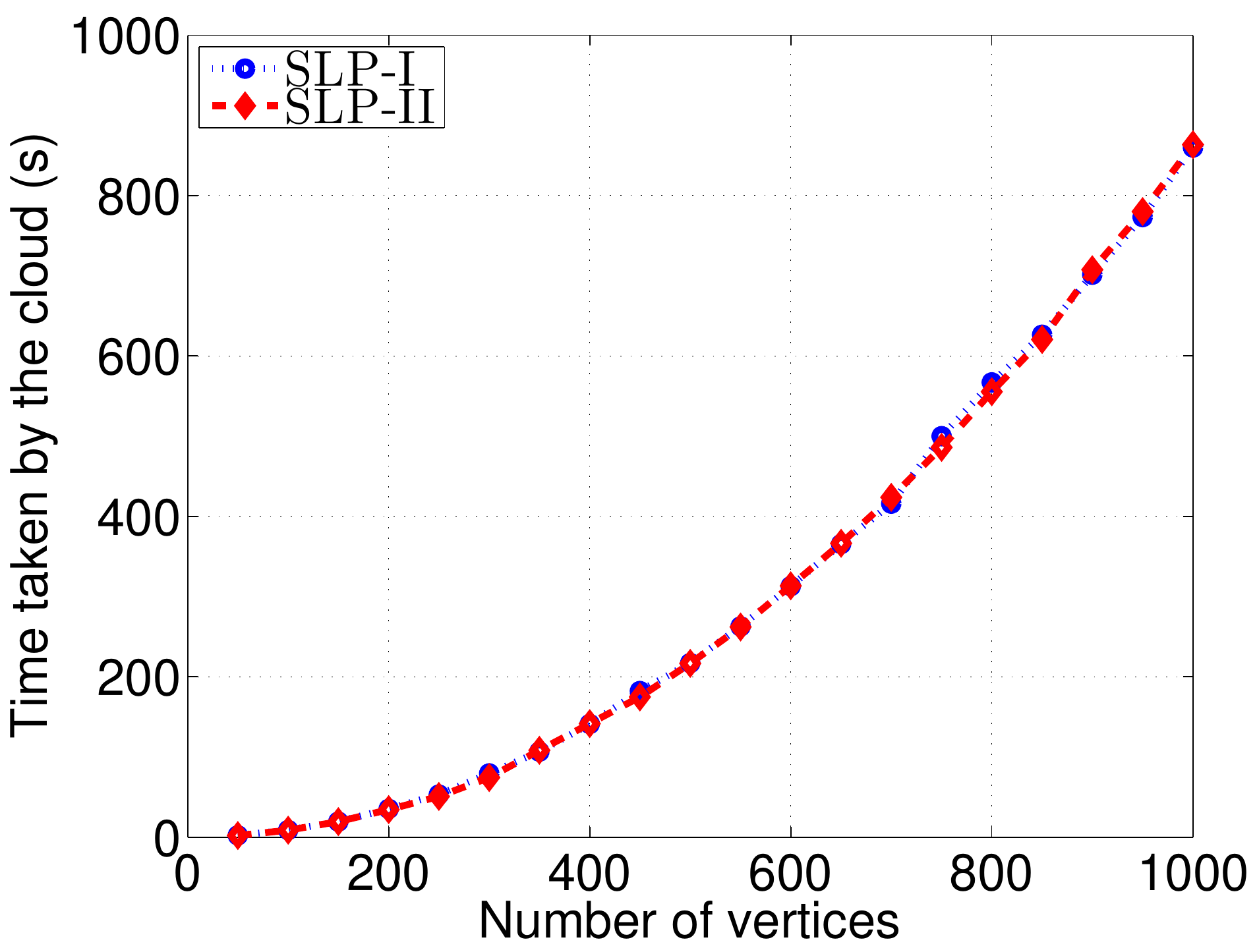}
		\caption{Encrypted score computation times}
		\label{fig:compCloud}
	\end{subfigure}
	%\bigskip
	\begin{subfigure}[t]{0.33\textwidth}
		\includegraphics[width=0.98\textwidth]{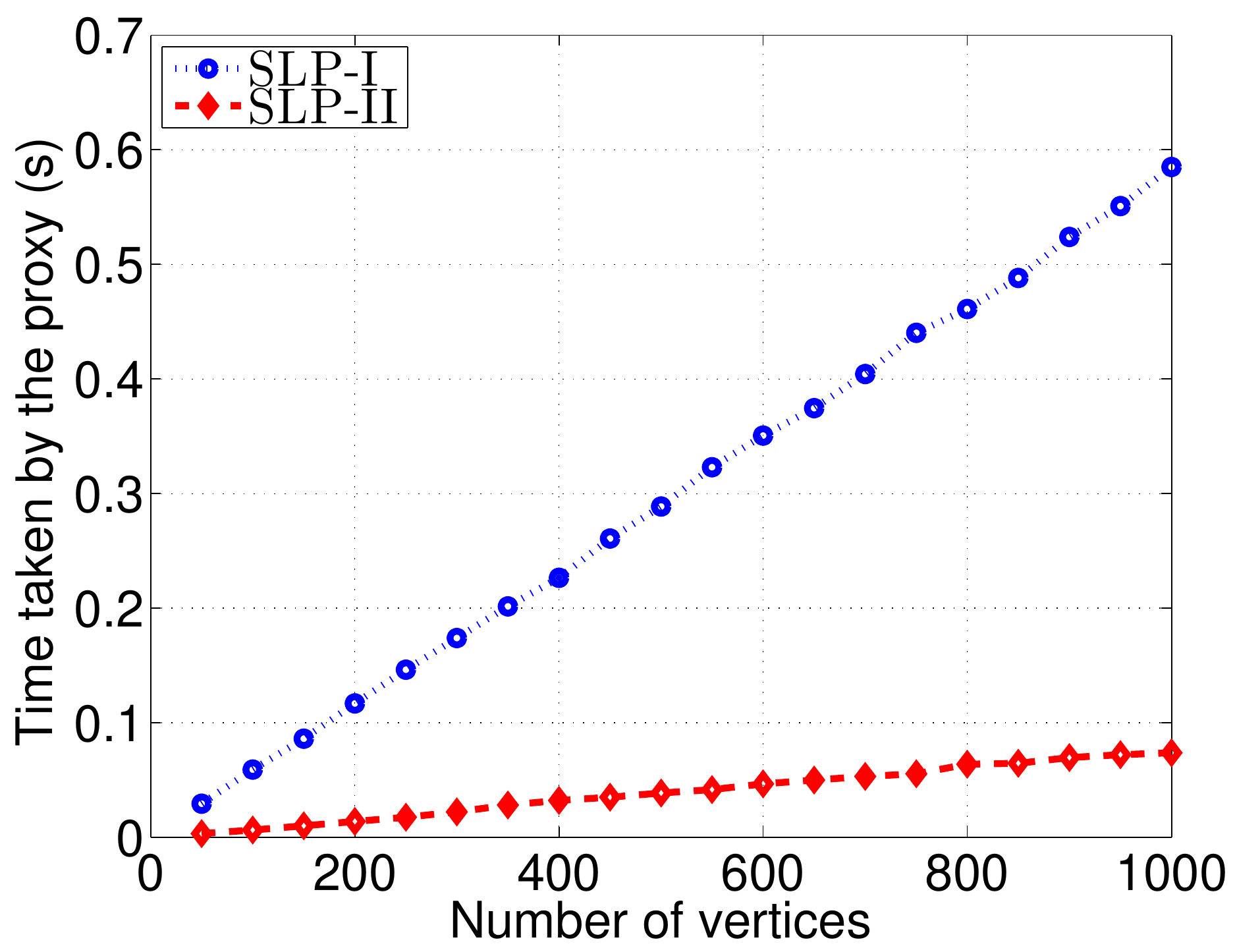}
		\caption{Score decryption and sorting times}
		\label{fig:compProxy}
	\end{subfigure}%
	
	\caption{comparison between $\mathtt{SLP}$-$\mathtt{I}$ and $\mathtt{SLP}$-$\mathtt{II}$ w.r.t. computation time when the primes are of 128 bits each}
	\label{fig:compSLPs}
\end{figure}
Similarly, time required by the cloud to compute score is independent of number of edges and depends on number of entries in the adjacency matrix i.e., $N^2$.
Time taken for each of the operations is shown in Fig.~\ref{fig:compSLPs}. In the figure, we have compared time for both $\mathtt{SLP}$-$\mathtt{I}$ and $\mathtt{SLP}$-$\mathtt{II}$ taking primes 128 bits each.

However, the time taken by the proxy to decrypt the scores is depends on the number of vertices. In  $\mathtt{SLP}$-$\mathtt{I}$, the proxy has to decrypt $|V|$ entries in $\mathbb{G}$ as well as $|V|$ scores in $\mathbb{G}_1$ where in  $\mathtt{SLP}$-$\mathtt{II}$, it decrypts only in $|V|$ scores in $\mathbb{G}_1$. So proxy takes more time in $\mathtt{SLP}$-$\mathtt{I}$ than in $\mathtt{SLP}$-$\mathtt{II}$. This can be observed in Fig.~\ref{fig:compProxy}.

\begin{figure}[!htbp]
	\centering
	{\includegraphics[width=0.32\textwidth]{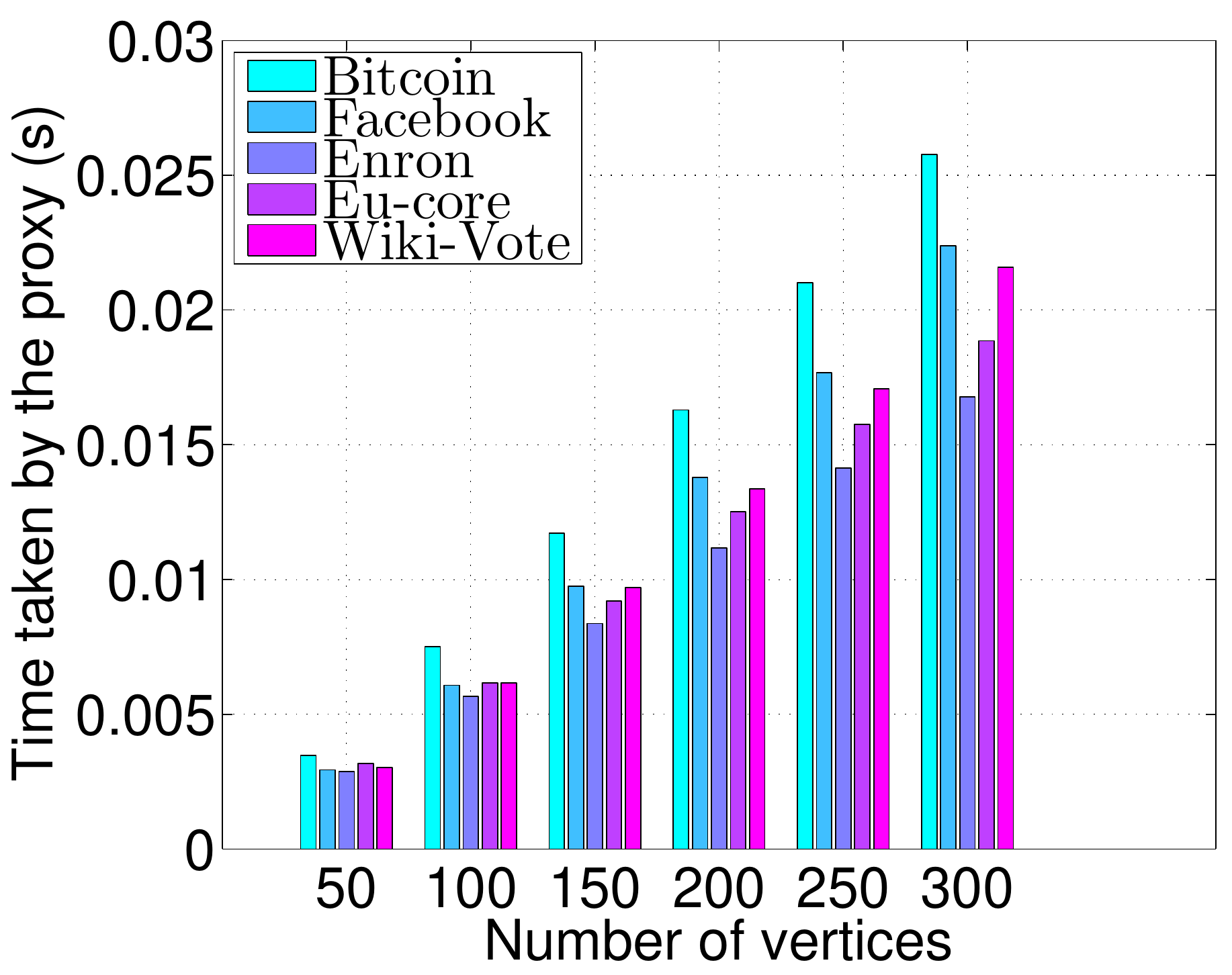}}
	\caption{Time taken by the proxy in $\mathtt{SLP}$-$\mathtt{II}$ for different datasets considering 128-bit primes}
	\label{fig:comp_128_slp2_proxy1}
\end{figure}
For a query, in $\mathtt{SLP}$-$\mathtt{II}$, the proxy decrypts scores only for corresponding vertices that are not incident to the  vertex queried for. So, only in this case, the computational time depends on the number of edges in the graph. As density of edges in a graph increases the chance of decreasing computational time for the graph increases. In Fig.~\ref{fig:comp_128_slp2_proxy1} we have compared computational time taken by the proxy in $\mathtt{SLP}$-$\mathtt{II}$ for different datasets.

In the above figures, we have considered only 128-bit primes. It can be observed from the experiment, the computational time depends on the security parameter. As we increase the size of the primes, the computational time grows exponentially. We have compared the change of computational time for all of the client, cloud and proxy for both  $\mathtt{SLP}$-$\mathtt{I}$  and  $\mathtt{SLP}$-$\mathtt{II}$ (see Fig.~\ref{fig:compTimeSLP1} and Fig.~\ref{fig:compTimeSLP2} respectively).     
However, in practical, as we keep the security bit fixed, keeping the security bits as low as possible improves the performance.

\begin{figure}[ht]
	\centering
	\begin{subfigure}[t]{0.33\textwidth}
		\centering
		\fbox{\includegraphics[width=0.9\textwidth]{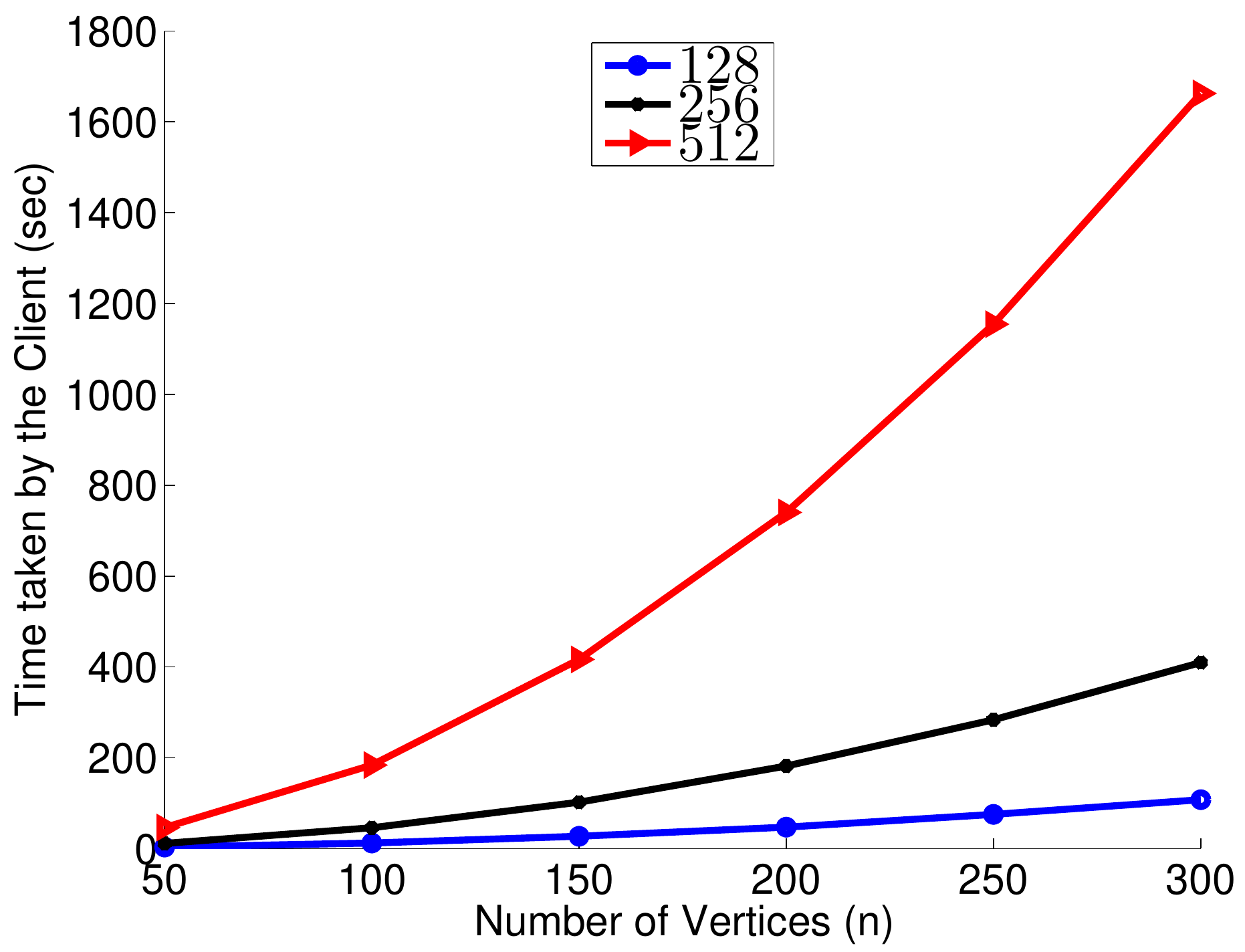}}
		\caption{Client time in $\mathtt{SLP}$-$\mathtt{I}$}
		\label{fig:slp1_client1}
	\end{subfigure}%
	\begin{subfigure}[t]{0.33\textwidth}
		\centering
		\fbox{\includegraphics[width=0.9\textwidth]{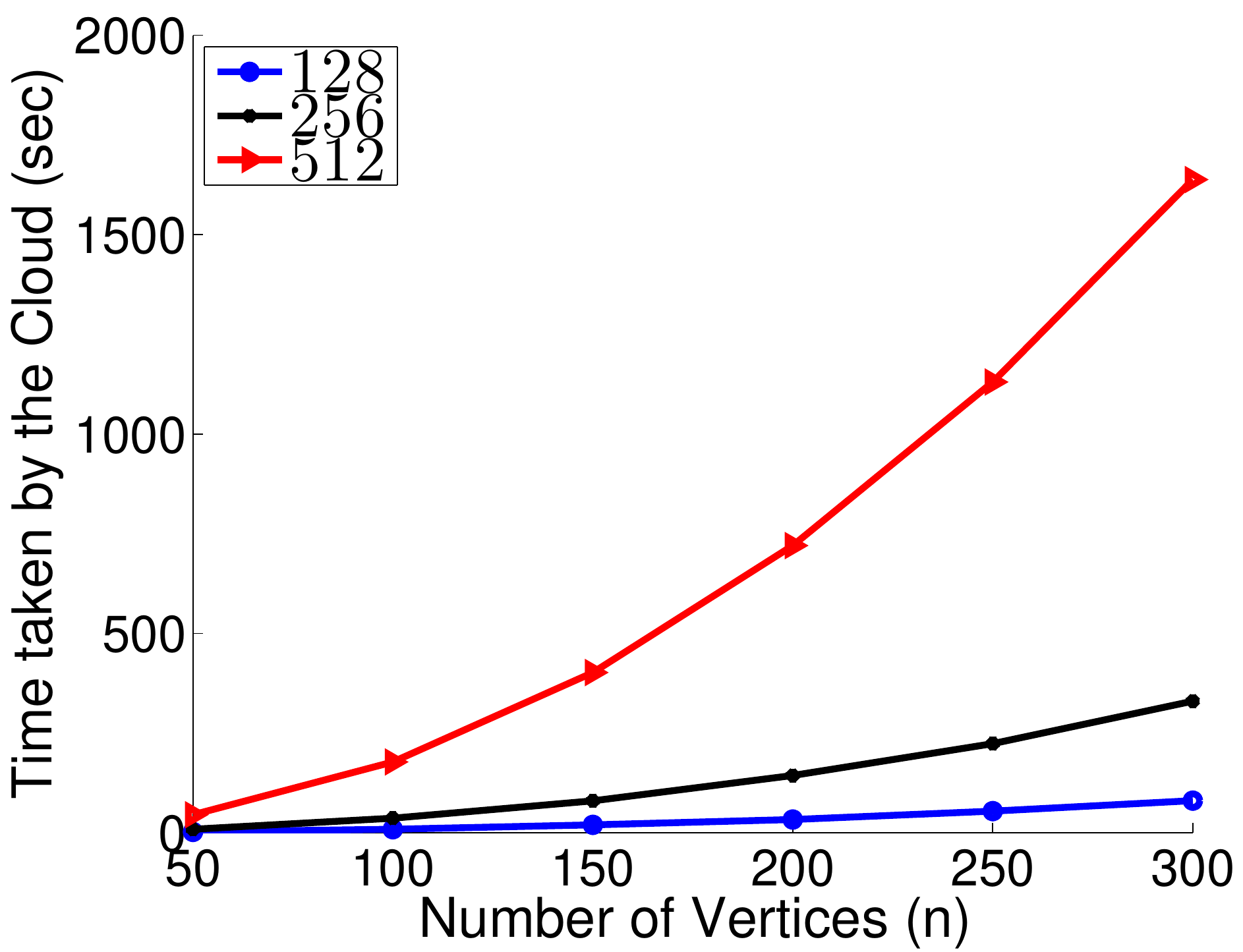}}
		\caption{Cloud time in $\mathtt{SLP}$-$\mathtt{I}$}
		\label{fig:slp1_cloud1}
	\end{subfigure}
	\begin{subfigure}[t]{0.33\textwidth}
		\centering
		\fbox{\includegraphics[width=0.9\textwidth]{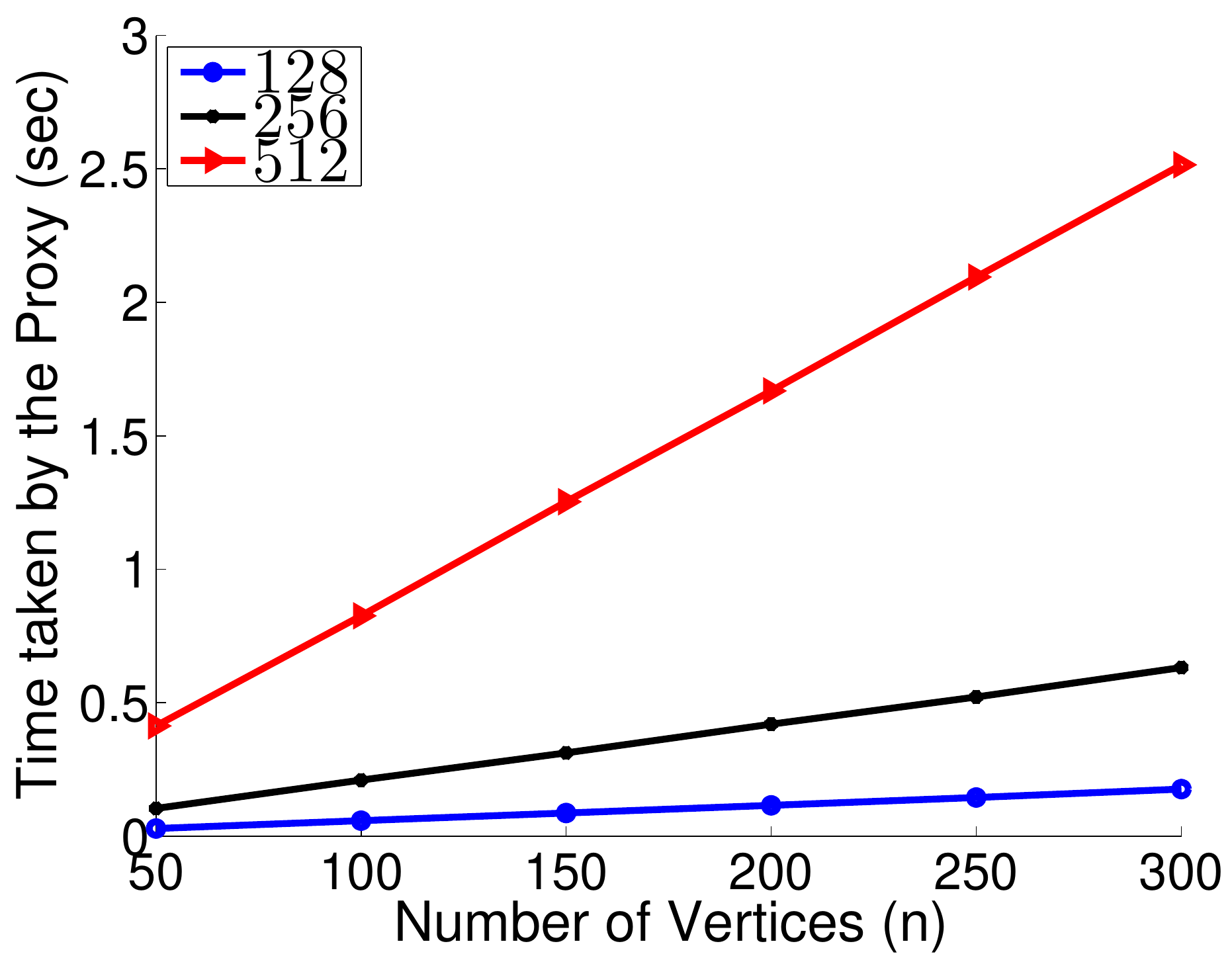}}
		\caption{Proxy time in $\mathtt{SLP}$-$\mathtt{I}$}
		\label{fig:slp1_proxy1}
	\end{subfigure}%
	
	\caption{Computational time in $\mathtt{SLP}$-$\mathtt{I}$ with 128, 256 and 512-bit primes}
	\label{fig:compTimeSLP1}
\end{figure}

\begin{figure}[ht]
	\centering
	\begin{subfigure}[t]{0.33\textwidth}
		\centering
		\fbox{\includegraphics[width=0.9\textwidth]{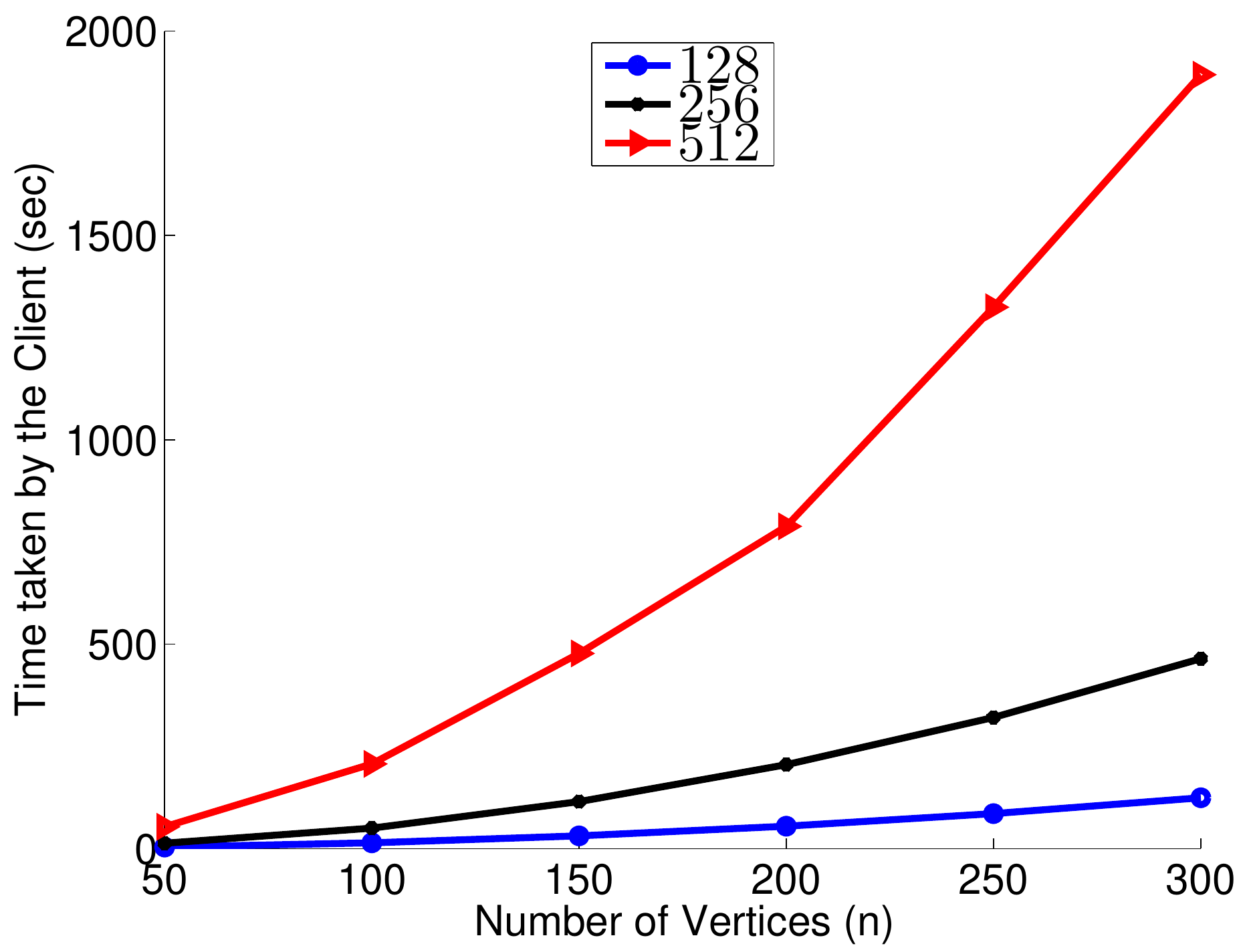}}
		\caption{Client time in $\mathtt{SLP}$-$\mathtt{II}$}
		\label{fig:slp2_client1}
	\end{subfigure}%
	\begin{subfigure}[t]{0.33\textwidth}
		\centering
		\fbox{\includegraphics[width=0.9\textwidth]{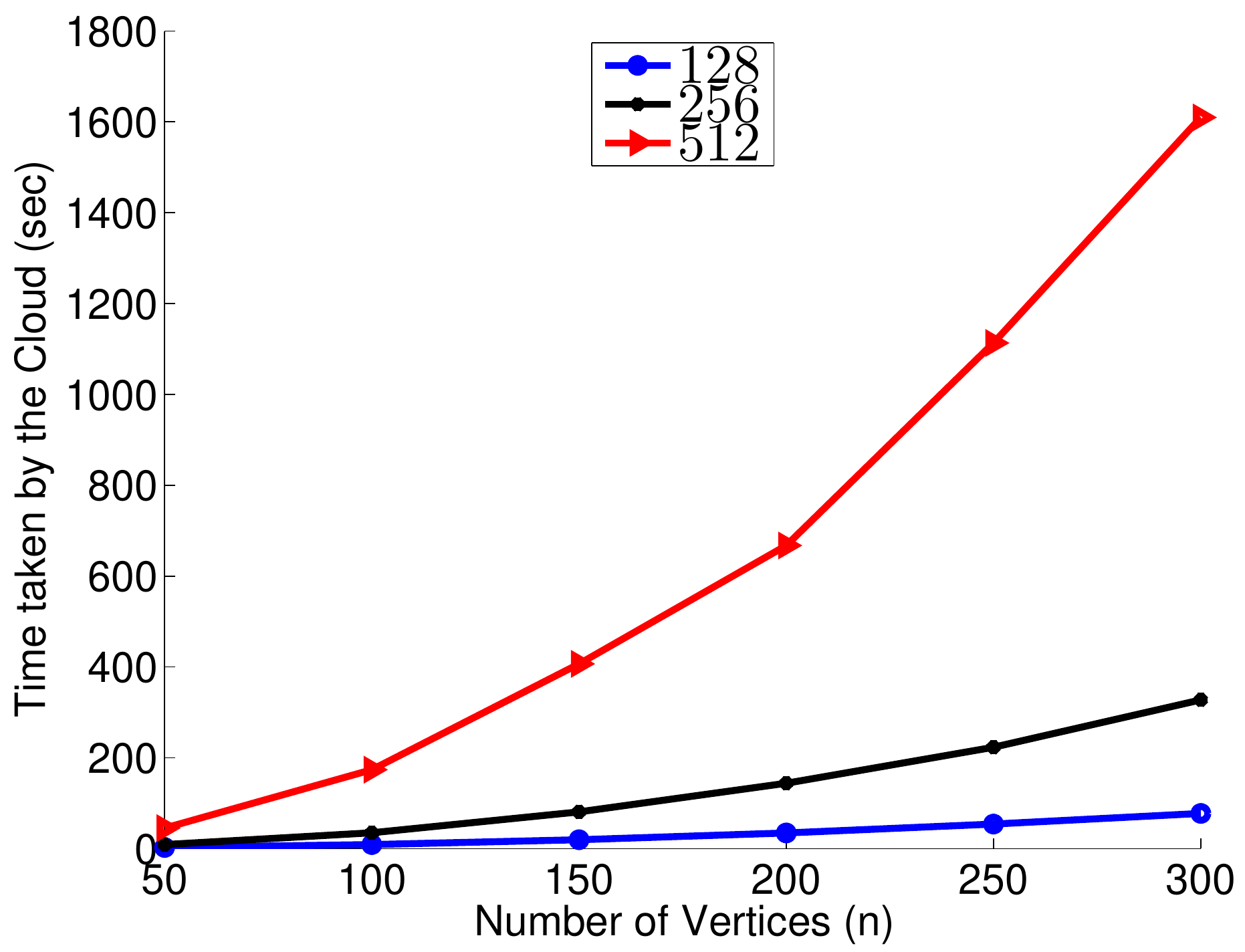}}
		\caption{Cloud time in $\mathtt{SLP}$-$\mathtt{II}$}
		\label{fig:slp2_cloud1}
	\end{subfigure}
	\begin{subfigure}[t]{0.33\textwidth}
		\centering
		\fbox{\includegraphics[width=0.9\textwidth]{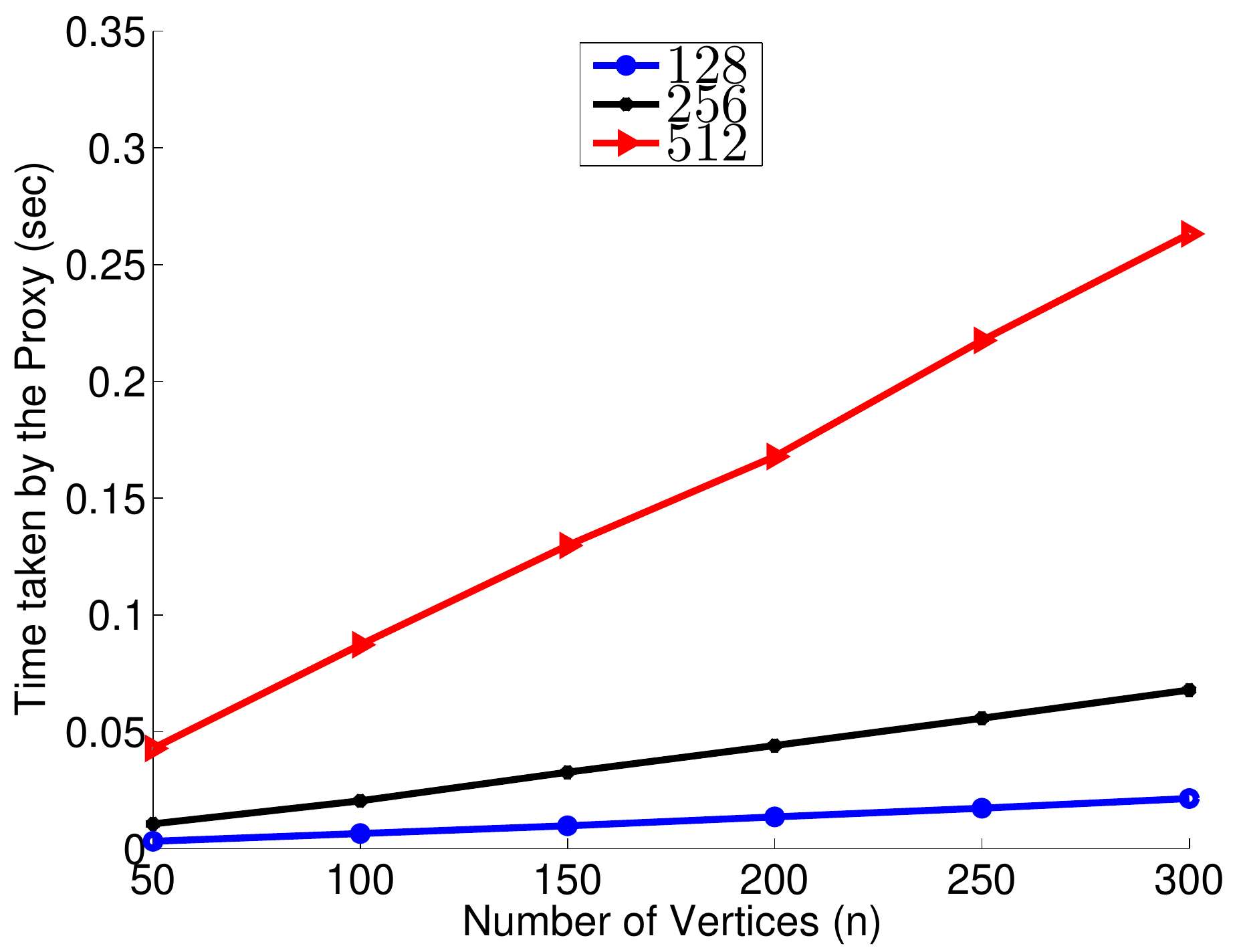}}
		\caption{Proxy time in $\mathtt{SLP}$-$\mathtt{II}$}
		\label{fig:slp2_proxy1}
	\end{subfigure}%
	
	\caption{Computational time in $\mathtt{SLP}$-$\mathtt{II}$ with 128, 256 and 512-bit primes}
	\label{fig:compTimeSLP2}
\end{figure}

\subsection{Estimation of computational cost in $\mathtt{SLP}$-$\mathtt{III}$ } \label{ComputationCost}
In the previous section, we have shown the experimental results for $\mathtt{SLP}$-$\mathtt{I}$ and $\mathtt{SLP}$-$\mathtt{II}$. In this section, we have estimated the computational cost for  $\mathtt{SLP}$-$\mathtt{III}$.
Encryption algorithm of $\mathtt{SLP}$-$\mathtt{III}$ is same as $\mathtt{SLP}$-$\mathtt{I}$. So both required same amount of time for encryption for the same dataset.
To estimate query time, we have considered a random graph with $10^3$ vertices.
%--------------------------------------------------

\medskip
\noindent
{\bf Query Time:} In $\mathtt{SLP}$-$\mathtt{III}$ the cloud computes encrypted scores and the proxy decrypts the scores as well as random numbers. The number of decryption in each group is same as $\mathtt{SLP}$-$\mathtt{I}$. However, in $\mathtt{SLP}$-$\mathtt{III}$, it requires an extra garbled circuit computation. For this, $1000$ OT for 128-bit security of ECC is required which takes $138*1000$ms = $138$s aprx. (\cite{AsharovL0Z13,NaorP01}). In addition to that, the PS evaluates the GC with $1000*(11*257+4)=2831000$ XOR-gates and $1000*(5*257+1)=1286000$ AND-gates. Assuming that the encryption used in each GC circuit is AES (128-bit), GC evaluation requires 2 AES decryption and the CS requires 8 encryption. As we see in \cite{benchmarkLink}, it requires 0.57 cycles per byte for AES. Thus, for evaluation in a single core processor, the PS requires (2*(1286000*256/8)*0.57) cycles = 46913280 cycles that takes $(46913280/(2.5*10^9))= 0.019$s. Similarly, The CS requires 0.078s to construct the GC.

The estimated costs are measured with respect to a single core 2.5 GHz processor. However, in practice, the CS provides a large number of multi-core processors. As we see all the computations can be computed in \emph{parallel}, the query cost can be reduced dramatically. Each of the above-mentioned costs can be improved to $\frac{cost}{p}$s with $p$ processors and cost is $cost$.

\section {Introduction to $\mathtt{SLP}_k$}\label{sec:slpk}
Let us define another variant of secure link prediction problem $\mathtt{SLP}_k$. Instead of returning the vertex with highest score, an $\mathtt{SLP}_k$ returns indices of $k$ number of top-scored vertices.

Let, a graph $G = (V,E)$ is given. Then, the \emph{top-$k$ Link Prediction Problem} states that given a vertex $v \in V$, it returns a set of vertices $\{u_1, u_2, \ldots, u_k \}$ such that $score(v,u_i) $ is among top-$k$ elements in $S_v$.
The top-$k$ link prediction scheme is said to be secure i.e., a secure top-$k$ link prediction problem scheme ($\mathtt{SLP}_k$) if, the servers do not get any meaningful information about $G$ from its encryption or sequence of queries. 

Our proposed schemes, $\mathtt{SLP}$-$\mathtt{I}$ and $\mathtt{SLP}$-$\mathtt{II}$, can be extended to support $\mathtt{SLP}_k$ queries. In $\mathtt{SLP}$-$\mathtt{I}$, the only change is that instead of returning only the index of the vertex with highest score, the proxy has to return the indices of the top-$k$ highest scores to the client.

%=============================================================================================
\section{Conclusion} \label{sec:Conclusion} 

In this paper, we have introduced the secure link prediction problem and discussed its security. We have presented three constructions of SLP. 
The first proposed scheme $\mathtt{SLP}$-$\mathtt{I}$ has the least computational time with maximum leakage to the proxy. The second one $\mathtt{SLP}$-$\mathtt{II}$ reduces the leakage by randomizing scores. However, it suffers high communication cost from proxy to the client. The third scheme $\mathtt{SLP}$-$\mathtt{III}$ has minimum leakage to the proxy. Though the garbled circuit helps to reduce leakage, it increases the communication and computational cost of the cloud and the proxy servers.  

Performance analysis shows that they are practical. We have implemented prototypes of first two schemes and measured the performance by doing experiment with different real-life datasets. We also estimated the cost for $\mathtt{SLP}$-$\mathtt{III}$. In the future, we want to make a library that support multiple queries including neighbor query, edge query, degree query, link prediction query etc.  

It is to be noted that the cost of computation without privacy and security is far better. The performance has been degraded since we have added security. The performance comes at the cost of security.

Throughout the paper, we have considered unweighted graph. As a future work the schemes can be extended to weighted graphs.
Moreover, we have initiated the secure link prediction problem and considered only common neighbors as score metric. As a future work, we will consider the other distance metrics like Jaccard's coefficient, Adamic/Adar, preferential attachment, Katz$_\beta$ etc. and compare the efficiency of each.

\section*{Acknowledgments} We thank Gagandeep Singh and Sameep Mehta of IBM India research for their initial valuable comments on this work.

%\bibliographystyle{aims}
%\bibliography{slp-Refs}

%\begin{comment}

\end{document}